\definecolor{darkblue}{rgb}{0,0,0.5} 
\definecolor{darkred}{rgb}{0.4,0,0} 
\definecolor{lightred}{rgb}{0.8,0,0} 
\definecolor{darkgreen}{rgb}{0,0.5,0} 
\definecolor{lightgray}{gray}{0.90}
\mathchardef\myhyphen ="2D\newcommand{\dash}{\ensuremath{\myhyphen}}
\newcommand{\st}{\mathsf{st}}
\newcommand{\heading}[1]{{\vspace{6pt}\noindent\sc{\textsc{#1.}}}}
\newcommand{\secpar}{\ensuremath{1^\lambda}}
\newcommand{\sample}{\ {\leftarrow}{\mbox{\tiny \$}} \ }
\newcommand{\get}{\leftarrow}
\newcommand{\Adv}{\mathbf{Adv}}
\newcommand{\Client}{\mathrm{Client}}
\newcommand{\Server}{\mathrm{Server}}
\newcommand{\A}{\mathcal{A}}
\newcommand{\Sim}{\mathcal{S}}
\newcommand{\HE}{\mathsf{HE}}
\newcommand{\HESetup}{\mathsf{HE.Setup}}
\newcommand{\HEEnc}{\mathsf{HE.Enc}}
\newcommand{\HEDec}{\mathsf{HE.Dec}}
\newcommand{\HEEval}{\mathsf{HE.Eval}}
\newcommand{\sk}{\mathsf{sk}}
\newcommand{\pk}{\mathsf{pk}}
\newcommand{\msg}{\mathsf{m}}
\newcommand{\ctxt}{\mathsf{c}}
\newcommand{\INDCPA}{\mathrm{IND}\dash\mathrm{CPA}}
\newcommand{\indcpa}{\mathrm{ind}\dash\mathrm{cpa}}
\newcommand{\algrule}[1][.2pt]{\par\vskip.5\baselineskip\hrule height #1\par\vskip.5\baselineskip}
\newtheorem{definition}{Definition}
\newtheorem{theorem}{Theorem}
\def\BibTeX{{\rm B\kern-.05em{\sc i\kern-.025em b}\kern-.08em
    T\kern-.1667em\lower.7ex\hbox{E}\kern-.125emX}}
\begin{document}

\title{CryptoRec: Privacy-preserving Recommendation as a Service\\
}

\author{\IEEEauthorblockN{ Jun Wang}
\IEEEauthorblockA{\textit{University of Luxembourg} \\
Luxembourg \\
jun.wang@uni.lu}\\
\and
\IEEEauthorblockN{Afonso  Arriaga}
\IEEEauthorblockA{\textit{University of Luxembourg} \\
Luxembourg \\
afonso.delerue@uni.lu}\\
\and
\IEEEauthorblockN{Qiang Tang}
\IEEEauthorblockA{\textit{Luxembourg Institute} \\
\textit{ of Science and Technology} \\
Luxembourg \\
tonyrhul@gmail.com}\\
\and
\IEEEauthorblockN{Peter Y.A. Ryan}
\IEEEauthorblockA{\textit{University of Luxembourg} \\
Luexembourg \\
peter.ryan@uni.lu}
}

\maketitle

\begin{abstract}
Recommender systems rely on large datasets of historical data and entail serious privacy risks. A server offering Recommendation as a Service to a client might leak more information than necessary regarding its recommendation model and dataset. At the same time, the disclosure of the client's preferences to the server is also a matter of concern. Devising privacy-preserving protocols using general cryptographic primitives (e.g., secure multi-party computation or homomorphic encryption), is a typical approach to overcome privacy concerns,  but in conjunction with state-of-the-art recommender systems often yields far-from-practical solutions.

In this paper, we tackle this problem from the direction of constructing crypto-friendly machine learning algorithms. In particular, we propose CryptoRec, a secure framework for Recommendation as a Service, which encompasses a homomorphic encryption friendly recommender system. This recommendation model possesses two interesting properties: (1) It models user-item interactions in an item-only latent feature space in which personalized user representations are automatically captured by aggregating pre-learned item features. This means that a server with a pre-trained model can provide recommendations for a client whose data is not in its training set.  Nevertheless, re-training the model with the client's data still improves accuracy. (2) It only uses addition and multiplication operations,  making the model straightforwardly compatible with homomorphic encryption schemes.

We demonstrate the efficiency and accuracy of CryptoRec on three real-world datasets. CryptoRec allows a server with thousands of items to privately answer a prediction query within a few seconds on a single PC, while its prediction accuracy is still competitive with state-of-the-art recommender systems computing over clear data.
\end{abstract}

\begin{IEEEkeywords}
Recommendation as a Service, Homomorphic Encryption, Privacy
\end{IEEEkeywords}

\section{Introduction}
Recommender system is one of the most frequently used machine learning applications, which is closely linked to our daily lives. For example, users can always receive personalized item recommendations (e.g., products, videos, etc.) when visiting websites like Amazon and Youtube. This is because their recommender systems are estimating user preferences by analyzing massive historical data, such as browsing records and locations. On one hand, a user can efficiently get preferred products from a vast number of items due to the recommender system; on the other hand, the user data is exposed to the service provider and can be abused~\cite{canny2002collaborative,NS2008}. As such, it results in immediate privacy risks to the user. In this paper, we study how to obtain efficient and accurate recommendation services while preserving data privacy. As an illustrative example, consider the following scenario:

A user (client) with some private data (e.g., ratings) would like to buy a recommendation service to efficiently figure out the most favorable products from a large number of potential candidates. A service provider (e.g., Amazon) has already collected a large database of ratings given by its users on sold items and wishes to monetize its data by selling Recommendation as a Service (RaaS). Different from existing recommender systems, in our scenario the client is unwilling to expose her data to the service provider due to the worries of privacy leakage. At the same time, commercial concerns or requirements may prevent the service provider from releasing its trained recommendation model to the public. In addition, releasing a trained model may also bring privacy risks to the users in the service provider's database. 

We can formalize the above scenario as a secure two-party computation (2PC) protocol. We first describe the Recommendation as a Service (RaaS) as a 2PC protocol, where on one side we have the Server (service provider) with its training data and on the other side the Client with her input. When the protocol terminates, the Client learns a prediction for her input. For obvious reasons, the protocol is only as useful to the Client as the accuracy of the predictions. Then we define the security of the 2PC protocol from two aspects: (1) the Client should only learn the predictions (including what can be inferred from the predictions); (2) the Server should not learn anything about the Client's input. General cryptographic primitives such as secure multi-party computation (SMC) and homomorphic encryption ($\HE$) are immediate candidates to overcome these security concerns.

State-of-the-art recommender systems often rely on non-linear operations, or require training the recommendation model with the Client's data~\cite{zhang2017deep,koren2009matrix,mnih2008probabilistic,zhang2017autosvd}. Generic solutions usually come at a prohibitive cost in terms of efficiency. While improving cryptographic tools (e.g., $\HE$ or SMC) is one typical way to achieve more efficient privacy-preserving solutions, unfortunately, the improvement is usually far from satisfactory to make these solutions practical enough. Recently, CryptoNets~\cite{gilad2016cryptonets} and MiniONN~\cite{liu2017oblivious} have been proposed for privacy-preserving neural networks based Machine Learning as a Service (MLaaS), the scenario of which is similar to ours. The primary contribution of CryptoNets and MiniONN is how to efficiently compute non-linear operations on encrypted data. Instead of using state-of-the-art non-linear activation functions such as ReLu ($relu(x)=max(0,x)$), CryptoNets proposed using a square activation function ($f(x)=x^2$) to avoid non-linear operations, to facilitate evaluating neural networks on encrypted data. This approach may result in a significant accuracy loss \cite{liu2017oblivious}. 
MiniONN introduced a multi-round interactive protocol based on $\HE$ and garbled circuits~\cite{kolesnikov2008improved}, in which non-linear operations were computed by interactions between the Server and the Client. This method requires the two parties to be online constantly, which may increase the difficulty of using MLaaS.

\emph{\textbf{Our contributions.}} We tackle this problem from the direction of designing crypto-friendly machine learning algorithms, so that we can achieve efficient solutions by directly using existing cryptographic tools. In particular, we propose CryptoRec, a new non-interactive secure 2PC protocol for RaaS, the key technical innovation of which is an $\HE$-friendly recommender system. This recommendation model possesses two important properties: (1) It uses only addition and multiplication operations, so that it is straightforwardly compatible with $\HE$ schemes. With this property, CryptoRec is able to complete recommendation computations without requiring the Server and the Client to be online continuously. Simply put, the Client sends her encrypted rating vector to the Server, then the Server computes recommendations with the Client's input and returns the results in an encrypted form. In addition to this, there is no other interaction between the two parties; (2) It can automatically extract personalized user representations by aggregating pre-learned item features, that we say the model has an item-only latent feature space. This property allows the Server with a pre-trained model to provide recommendation services without a tedious re-training process, which significantly improves the efficiency performance. Note that the Client's data is not in the Server's database which is used for model training. In order to make the RaaS secure 2PC protocol more complete, we also precisely define the security notion of CryptoRec in this paper.

CryptoRec is able to produce recommendations in a direct mode (using only a pre-trained model learned on the Server's database which does not contain the Client's data) or in a re-training mode (where the model is first re-trained with the Client's input before computing recommendations). The re-training mode produces slightly more accurate predictions. In the direct mode, we can instantiate our protocol with an additive HE scheme such as the very efficient Paillier cryptosystem~\cite{paillier1999public}. We test both modes of CrytoRec on MovieLens-1M (ml1m)\cite{data:ml1m}, Netflix (netlfix)~\cite{data:netflix} and Yahoo-R4 (yahoo)~\cite{data:yahoo} public datasets. Experiment results show that the direct mode allows the Server with thousands of items to privately answer a prediction query in a few seconds on a single PC. To re-train the model with the Client's input, we need a limited number of homomorphic additive and multiplicative operations. Therefore, we must rely on a Somewhat HE scheme (SWHE)~\cite{fan2012somewhat}. Besides the advantage that our solution relies only on linear operations and converges in a very few numbers of iterations, the accuracy of the predictions produced by our model is less than 2\% away from those achieved by the most accurate collaborative learning algorithms known to date (depending on the datasets). In practice, the Client can choose either of the two modes, according to her preference on the trade-off between accuracy and efficiency.

\section{Preliminaries}
\label{sec:preli}

In this section, we introduce collaborative filtering and homomorphic encryption. For simplicity, the adopted notation and commonly used variables are summarized in Table~\ref{tab:notation}.  Scalars are denoted in  lower-case characters, vectors are denoted in \textbf{lower-case bold} characters, matrices are denoted in \textbf{ Upper-case bold} characters. We write $a \gets b$ to denote the algorithmic action of assigning the value of $b$ to the variable $a$, and $x \sample X$ for the action of sampling a uniformly random element $x$ from set $X$.

\begin{table}[ht]
\centering
\begin{tabular}{l|l}
\hline
$n/m$                            & number of users / items                \\
$ \textbf{R} \in \mathbb{N}^{n\times m}$   & rating matrix                          \\
$r_{ui}$                         & rating given by user $u$ for item $i$           \\
$\hat{r}_{ui}$                   & estimation of $r_{ui}$                 \\
$ \textbf{r}_u \in \mathbb{N}^{1\times m}$ & rating vector of user $u: \{ r_{ui} \}_{i=1}^{m}$           \\
$\textbf{r}_i \in \mathbb{N}^{n\times 1}$ & rating vector of item $i:\{ r_{ui} \}_{u=1}^{n} $              \\
$\bar{r}_u / \bar{r}_i$           & mean rating of user $u$ / item $i$     \\
$\phi_{ui}$                         & $\phi_{ui} =1 $ if $r_{ui}$ exists, otherwise $\phi_{ui}=0$. \\ 
\textbf{$\Theta$}                         & general form of model parameters       \\
$\llbracket x \rrbracket$          & encryption of $x$                      \\
$\llbracket \textbf{x} \rrbracket$          &  $ \{ \llbracket x_1 \rrbracket, \llbracket x_2 \rrbracket,\llbracket x_3 \rrbracket, \cdots  \}$               \\
$\pk/\sk$                          & public key / secret key                \\
$\oplus$                         & addition between two ciphertexts       \\
                                 & or a plaintext and a ciphertex         \\
$\odot$                          & multiplication between                 \\
                                 & a plaintext and a ciphertex            \\
$\otimes$                        & multiplication between two ciphertexts \\
$\llbracket \textbf{x} \rrbracket \llbracket \textbf{y} \rrbracket $            &  $ \sum_i \llbracket x_i\rrbracket \otimes \llbracket y_i\rrbracket $, homomorphic dot-product             \\
$ x \odot \llbracket \textbf{y} \rrbracket $            &  $ \{ x\odot \llbracket y_1 \rrbracket, x\odot \llbracket y_2 \rrbracket,x\odot \llbracket y_3 \rrbracket, \cdots  \}$             \\
$\mathcal{D}_0 \overset{s}{\approx} \mathcal{D}_1$ & distributions are statistically indistinguishable \\
$\mathcal{D}_0 \overset{c}{\approx} \mathcal{D}_1$ & distributions are computationally indistinguishable \\ \hline
\end{tabular}

\caption{Variables and notations}
\label{tab:notation}
\end{table}

\subsection{Collaborative Filtering}
\label{sec:cf}
Collaborative filtering (CF) algorithms are the current state-of-the-art recommender systems in terms of accuracy~\cite{su2009survey,zhang2017deep}. Chronologically, the memory-based approach, such as neighborhood-based methods (NBMs), is the first class of prevalent recommendation algorithms. Not only this class benefits from a good prediction accuracy, it also inherits recommendation explainability~\cite{desrosiers2011comprehensive}; Later, model-based approach algorithms, of which matrix factorization (MF) is a notable example, became popular due to their exceptional accuracy~\cite{su2009survey}. Recently, with the rise of deep learning, neural network based recommender systems are emerging, such as AutoRec~\cite{sedhain2015autorec}, leading the current prediction accuracy benchmarks~\cite{zhang2017deep}.

\subsubsection{Neighborhood-Based Method (NBM)}
The neighborhood based method estimates a user's rating on a targeted item by taking the weighted average of a certain number of ratings of the user or of the item. Formally, an item-based NBM (I-NBM) is defined as
\begin{equation}
\label{eq:i-nbm}
\hat{r}_{ui}=\bar{r}_i + \frac{\sum_{j\in \mathcal{N}_u(i)}s_{ij}(r_{uj}-\bar{r}_j)}{\sum_{j\in \mathcal{N}_u(i)}|s_{ij}|},
\end{equation}
where $\bar{r}_i$ is the mean rating of item $i$, $s_{ij} \in \textbf{S}^{m \times m}$ represents the similarity between item $i$ and $j$, and $\mathcal{N}_u(i)$ denotes a set of items rated by user $u$ that are the most similar to item $i$ according to the similarity matrix $\textbf{S} \in \mathbb{R}^{m\times m}$. Pearson correlation is one of the most widely used similarity metrics~\cite{su2009survey}:
\begin{equation}
\label{eq:pcc}
s_{ij}=\frac{\sum_{u \in \mathcal{U}_{ij}} (r_{ui}-\bar{r}_i)(r_{uj}-\bar{r}_j)}{\sqrt{\sum_{u \in \mathcal{U}_{ij}}(r_{ui}-\bar{r}_i)^2}\sqrt{\sum_{u \in \mathcal{U}_{ij}}(r_{uj}-\bar{r}_j)^2}},
\end{equation}
where $\mathcal{U}_{ij}$ denotes the set of users that rated both items $i$ and $j$.  The matrix of similarities \textbf{S} is the model parameters \textbf{$\Theta$} of NBM, $\textbf{$\Theta$} =\{ \textbf{S}\} $. User-based NBM (U-NBM) is the symmetric counterpart of I-NBM. Normally, I-NBM is more accurate and robust than U-NBM~\cite{su2009survey}.

\subsubsection{Matrix Factorization (MF)}
Let $\textbf{R}^{n\times m}$ be a sparse rating matrix formed by $n$ users and $m$ items,  in which each user rated only a small number of the $m$ items, and the missing values are marked with zero. Matrix factorization (MF) decomposes the rating matrix $\textbf{R}$ into two low-rank and dense feature matrices~\cite{koren2009matrix}:
\begin{equation}
\textbf{R} \approx  \textbf{P}\textbf{Q}^T,
\end{equation}
where $\textbf{P} \in \mathbb{R}^{n\times d}$ is the user feature space, $\textbf{Q} \in \mathbb{R}^{m\times d}$ is the item feature space and $d \in \mathbb{N}^+$ is the dimension of user and item features. To predict how user $u$ would rate item $i$, we compute $\hat{r}_{ui} = \textbf{p}_u\textbf{q}_i^T$,
where $\textbf{p}_u^{1\times d} \subset \textbf{P}$ and $\textbf{q}_i^{1\times d} \subset \textbf{Q}$ denote the learned features vectors of user $u$ and item $i$, respectively. A standard way of optimizing $\textbf{P}$ and $\textbf{Q}$ is to minimize the regularized squared error function
\begin{equation}
\label{eq:minmf}
\overset{min}{\textbf{P},\textbf{Q}} \sum_{(u,i)\in \textbf{R}}(\textbf{p}_u\textbf{q}_i^T-r_{ui})^2 + \lambda(||\textbf{p}_u||^2+||\textbf{q}_i||^2),
\end{equation}
by using the stochastic gradient descent (SGD) optimization method \cite{koren2009matrix}, but only based on observed ratings (rating matrix $\textbf{R}$ is sparse). The constant $\lambda$ is a regularization factor. The model parameters of MF are $\Theta = \{\textbf{P}, \textbf{Q}\}$.

\subsubsection{Neural Network Approach}

In addition to the success of neural networks in visual recognition and speech synthesis tasks is widely diffused, many works also focus on constructing neural recommender systems. (We refer to the reader to~\cite{zhang2017deep} for an overview.) AutoRec~\cite{sedhain2015autorec} is a notable example, built on top of Autoencoders~\cite{ng2011sparse}. Item-based AutoRec (I-AutoRec) reconstructs the inputs $\textbf{r}_{i}$ by computing
\begin{equation}
\hat{\textbf{r}}_i= f(\textbf{W}\cdot g(\textbf{V} \textbf{r}_i+\textbf{b}^{(1)})+\textbf{b}^{(2)}),
\end{equation}
where $g(\cdot)$ and $f(\cdot)$ are  activation functions, e.g. the Sigmoid function ($\frac{1}{1+e^{-x}}$) or ReLu function ($max(0,x))$. Non-linear activation functions are crucial to the success of neural networks. Model parameters are defined as follows: $\Theta = \{ \textbf{W},  \textbf{V},  \textbf{b}^{(1)},  \textbf{b}^{(2)} \}$,  where $\textbf{W} \in \mathbb{R}^{n\times d}$ and $\textbf{V} \in \mathbb{R}^{d\times n}$ are for `transformations', and $\textbf{b}^{(1)} \in \mathbb{R}^{d\times 1}$ and $\textbf{b}^{(2)} \in \mathbb{R}^{n \times 1}$ are for ``bias" terms. $\Theta $ is learned by using the SGD to minimize the regularized square error function
\begin{equation}
\label{eq:minautorec}
\overset{min}{\textbf{W},\textbf{V}, \textbf{b}^{(1)}, \textbf{b}^{(2)}} \sum_{i\in R}||\hat{\textbf{r}}_{i}-\textbf{r}_{i}||^2 + \lambda(||\textbf{W}||^2+||\textbf{V}||^2),
\end{equation}
where the gradient of each model parameter is computed by only observed ratings~\cite{sedhain2015autorec}.  Equation~(\ref{eq:minautorec}) defines I-AutoRec. The user-based AutoRec (U-AutoRec) is defined symmetrically in the obvious way. Experimental results show that I-AutoRec outperforms U-AutoRec in terms of accruracy~\cite{sedhain2015autorec}.

\subsection{Homomorphic Encryption}
Homomorphic encryption ($\HE$) is a form of encryption that allows computations to be carried over ciphertexts. The result, after decryption, is the same as if the operations had been performed on the plaintexts~\cite{gentry2009fully}. As an illustrative example, consider two plaintexts $x_1$ and $x_2$ and their corresponding ciphertexts $\llbracket x_1 \rrbracket \sample \HEEnc(x_1,$ $\pk)$ and $\llbracket x_2 \rrbracket \sample \HEEnc(x_2,\pk)$. An encryption scheme is additively homomorphic if it satisfies
$x_1+x_2=\HEDec(\llbracket x_1 \rrbracket \oplus \llbracket x_2 \rrbracket, \sk)$
or multiplicatively homomorphic if we have
$x_1\times x_2=\HEDec(\llbracket x_1 \rrbracket \otimes  \llbracket x_2 \rrbracket, \sk)$,
where $\oplus$ and $\otimes$ represent the homomorphic addition and homomorphic multiplication operations, respectively.

Some HE schemes are only either additively homomorphic or multiplicatively homomorphic, such as~\cite{paillier1999public}. The schemes that fall into this category are know to be {\em partially homomorphic} (PHE). Schemes that support both additions and multiplications, but only a limited number of times, are known as \emph{somewhat homomorphic} (SWHE), as opposed to those that allow an unbounded number of homomorphic operations, which are called \emph{fully homomorphic encryption} (FHE) schemes~\cite{gentry2009fully,fan2012somewhat}.
The efficiency of the schemes in each class is usually related to the expressiveness of the supported operations, meaning that PHE schemes are more efficient than SWHE schemes, which in turn are more efficient that FHE schemes.

In addition to the additively or multiplicatively homomorphic properties of ciphertexts, HE schemes also allow additions and multiplications between a ciphertext and a plaintext, i.e. $x_1+x_2 = \HEDec(\llbracket x_1 \rrbracket \oplus x_2,\sk)$  and $x_1 \times x_2 = \HEDec(\llbracket x_1 \rrbracket \odot x_2,\sk)$.

\heading{Syntax} A $\HE$ scheme is a tuple of four ppt algorithms $\HE := (\HESetup, \HEEnc, \HEEval, \HEDec)$ as follows:
\begin{itemize}
\item $\HESetup(\secpar)$ is the setup algorithm. It takes as input the security parameter $\lambda$ and outputs a private/public key pair $(\sk,\pk)$. The public key $\pk$ includes a description of the message space $\mathcal{M}$.
\item $\HEEnc(\msg, \pk)$ is the encryption algorithm, which takes as input the public key $\pk$ and a message $\msg \in \mathcal{M}$ and outputs a ciphertext $\ctxt$.
\item $\HEEval(f,\ctxt_1,...,\ctxt_t,\pk)$ is the homomorphic evaluation algorithm. It takes as input a public key $\pk$, a circuit $f : \mathcal{M}^t \rightarrow \mathcal{M}$ in a class $\mathcal{F}$ of supported circuits and $t$ ciphertexts $\ctxt_1,...,\ctxt_t$, and returns a ciphertext $\ctxt$.
\item $\HEDec(\ctxt,\sk)$ is the decryption algorithm that on input a secret key $\sk$ and a ciphertext $\ctxt$, it returns a message $\msg$ or a special failure symbol $\perp$.
\end{itemize}
We now briefly describe correctness, IND-CPA security and circuit privacy for a homomorphic encryption scheme $\HE$.

\heading{Correctness} $\HE$ is correct if for all honestly generated keys $(\pk,\sk) \sample \HESetup(\secpar)$, for all supported $f \in \mathcal{F}$ and for all messages $(\msg_1,...,\msg_t) \in \mathcal{M}^t$, we have that if $\ctxt_i \sample \HEEnc(\pk,\msg_i)$, $\forall i \in [t]$, then it holds with overwhelming probability over the random coins of all algorithms that
\[
\HEDec(\HEEval(f,(\ctxt_1,...,\ctxt_t),\pk),\sk) = f(\msg_1,...,\msg_t).
\]

\heading{IND-CPA security}
This is the standard notion of security for any homomorphic encryption scheme. It guarantees that nothing can be learned from a ciphertext about the message it encrypts (beyond, perhaps, its length).

\begin{definition}
We say that $\HE$ is $\INDCPA$ secure if for every {\em legitimate} ppt adversary $\A := (\A_0,\A_1)$ the following definition of advantage is negligible in the security parameter:
\[
\Adv^{\indcpa}_{\HE,\A}(\lambda) := 2 \cdot\Pr[\INDCPA_\HE^\A(\secpar)] - 1\,,
\]
where game $\INDCPA_\HE^\A$ is described in Figure.~\ref{fig:def:indcpa} and a legitimate adversary outputs in its first stage (i.e. algorithm $\A_0$) two messages of equal bit length.
\end{definition}

\begin{figure}[ht]
	\centering\framebox{
	\begin{tabular}{lll}
		\begin{varwidth}[t]{1\textwidth}
			\underline{$\INDCPA_{\HE}^{\A}(\secpar)$}: \\
			$(\sk,\pk) \sample \HESetup(\secpar)$ \\
			$(\msg_0,\msg_1,\st) \sample \A_0(\secpar,\pk)$ \\
			$b \sample \{0,1\}$ \\
			$\ctxt \sample \HEEnc(\pk,\msg_b)$ \\
			$b' \sample \A_1(\st,\ctxt)$ \\
			return $(b = b')$
		\end{varwidth}
	\end{tabular}
	} \caption{$\INDCPA$ security game} \label{fig:def:indcpa} \label{fig:def:indcpa}
\end{figure}

\heading{Circuit privacy} An additional requirement of many HE applications, including ours, is that the evaluated ciphertext should also hide the function $f$, apart from what is inevitably leaked through the outcome of the computation. This property is known as the circuit privacy~\cite{Gentry2009,ip2007}.

\begin{definition}
\label{def:circuitpriv}
A homomorphic encryption scheme $\HE$ is circuit private if there exists a ppt simulator $\Sim$ such that for any security parameter $\lambda$, any key pair $(\sk,\pk) \sample \HESetup(\secpar)$, any supported function $f \in \mathcal{F}$ and any tuple of messages $\msg_1,...,\msg_t \in \mathcal{M}^t$, it holds that
\begin{flalign*}
((\sk,\pk),(\ctxt_1,...,\ctxt_t),\HEEval(f,\ctxt_1,...,\ctxt_t,\pk)) \\
\overset{s}{\approx} ((\sk,\pk),(\ctxt_1,...,\ctxt_t),\Sim(f(\msg_1,...,\msg_t),\pk)),
\end{flalign*}
where $\ctxt_i \sample \HEEnc(\pk,\msg_i)$.
\end{definition}

\section{Security Guarantees}
\label{sec:security}
We consider a general protocol between two parties, referred here as Client and Server. The Client has input $x$, while the Server's input consists of the ML algorithm $f$, random coins $r$, and the model parameters pre-trained with the training set $y$. The protocol, which in general can involve multiple rounds of interaction but in our case (i.e., CryptoRec) is simply non-interactive, is denoted by $\mathrm{Client}(x)\leftrightarrow\mathrm{Server}(f,y,r)$. We chose to give $r$ as an explicit input to the Server as these coins are used by the ML algorithm $f$, if probabilistic. Nevertheless, note that both Client and Server are (possibly) randomized algorithms as well and might use additional random coins.
The goal is for the Client to learn the prediction $f(x,y;r)$, so the protocol is complete if the final output of the Client $\mathsf{out}_\mathrm{Client}[\mathrm{Client}(x)\leftrightarrow\mathrm{Server}(f,y,r)] = f(x,y;r)$.
Security demands that the Client learns nothing on the Server's input beyond what can be inferred from the Client's input $x$ and the outcome prediction $f(x,y;r)$. It also demands that the Server learns nothing on the Client's input $x$. Formally, we adopt the real/ideal paradigm:

\begin{itemize}
\item{\bf Real World.} The protocol $\Client(x)\leftrightarrow\Server(f,y,r)$ is executed and the real-world adversary $\A$ can corrupt either the Client or the Server (but not both at the same time), meaning $\A$ can choose the input of the corrupted party and observe all communications. We assume that the adversary is static and semi-honest, i.e. the adversary chooses which party to corrupt before the protocol execution starts and the corrupted party follows the protocol honestly.
\item{\bf Ideal World.} The Client and Server send their input $x$ and $(f,y,r)$, respectively, to a trusted third party (TTP) that computes $f(x,y;r)$. The Client gets the output of the computation $f(x,y;r)$ while Server gets nothing.
\end{itemize}

\begin{definition}
For every adversary $\A$, there exist ppt simulators $\Sim_1$ and $\Sim_2$ such that for all inputs $x$ and $(f,y,r)$: \\

$\mathsf{view}_\Client[\A(x)\leftrightarrow\Server(f,y,r)] \overset{c}{\approx} \Sim_1(x,f(x,y;r))$, \\
\indent $\mathsf{view}_\Server[\Client(x)\leftrightarrow\A(f,y,r)] \overset{c}{\approx} \Sim_2(|x|,f,y,r)$.
\label{def:sec}
\end{definition}


We derive a simple non-interactive 2PC protocol from $\HE$ and show that it provides the aforementioned security guarantees. Let $\Pi$ be a non-interactive protocol, where the Client generates its own public key pair $(\sk,\pk) \sample \HESetup(1^\lambda)$, encrypts its input $\ctxt \sample$ $\HEEnc(x,\pk)$ and sends $\ctxt$ to the Server. The Server evaluates $\ctxt' \sample \HEEval(f(\cdot,y;r),\ctxt,\pk)$ and sends it back to the Client. The Client decrypts $\ctxt'$ and outputs the prediction $p \get \HEDec(\ctxt',\sk)$. 

\begin{theorem}
\label{theorem:security}
If $\HE$ is IND-CPA-secure and circuit private, then $\Pi$ satisfies security definition~\ref{def:sec}.
\end{theorem}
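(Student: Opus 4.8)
The plan is to exhibit the two simulators $\Sim_1,\Sim_2$ demanded by Definition~\ref{def:sec} and to bound the distinguishing advantage against each by a separate reduction: IND-CPA security will take care of the Server's view, and circuit privacy will take care of the Client's view. First I would write out the two views explicitly for the concrete protocol $\Pi$. The Server's view $\mathsf{view}_\Server$ consists of its input $(f,y,r)$, its internal coins, the public key $\pk$ it receives (needed for evaluation), and the ciphertext $\ctxt \sample \HEEnc(x,\pk)$ sent by the Client. The Client's view $\mathsf{view}_\Client$ consists of $x$, the key pair $(\sk,\pk)$ it generated, the coins used to form $\ctxt$, and the evaluated ciphertext $\ctxt' \sample \HEEval(f(\cdot,y;r),\ctxt,\pk)$ returned by the Server; note the decrypted prediction $p \gets \HEDec(\ctxt',\sk)$ is a deterministic function of $(\ctxt',\sk)$, hence already determined by the rest of the view.

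\textbf{Server side.} The only component of $\mathsf{view}_\Server$ that depends on $x$ is $\ctxt$. I define $\Sim_2(|x|,f,y,r)$ to run $(\sk,\pk)\sample\HESetup(\secpar)$, compute $\ctxt\sample\HEEnc(0^{|x|},\pk)$, sample fresh Server coins, and output $(f,y,r,\pk,\ctxt,\text{coins})$. Computational indistinguishability from the real view reduces to IND-CPA: fixing the inputs, given a distinguisher $\mathcal{D}$ I build a legitimate adversary $\A=(\A_0,\A_1)$ where $\A_0$ outputs the equal-length pair $(x,0^{|x|})$ (passing $f,y,r$ along in its state), and $\A_1$, on challenge $\ctxt^\star$, assembles a Server view using $\ctxt^\star$ in place of $\ctxt$ and returns $\mathcal{D}$'s guess. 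When $b=0$ this is exactly the real $\mathsf{view}_\Server$ and when $b=1$ it is exactly the output of $\Sim_2$, so $\Adv^{\indcpa}_{\HE,\A}$ equals $\mathcal{D}$'s distinguishing advantage, which is therefore negligible.

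\textbf{Client side.} Let $\Sim_{\mathsf{cp}}$ be the circuit-privacy simulator of Definition~\ref{def:circuitpriv}. Define $\Sim_1(x,f(x,y;r))$ to run $(\sk,\pk)\sample\HESetup(\secpar)$, form $\ctxt\sample\HEEnc(x,\pk)$ with freshly sampled coins, compute $\ctxt'\sample\Sim_{\mathsf{cp}}(f(x,y;r),\pk)$, and output $(x,(\sk,\pk),\text{enc-coins},\ctxt')$. Applying Definition~\ref{def:circuitpriv} with the circuit $f(\cdot,y;r)$ (which has $y,r$ hardwired) and the plaintext(s) $x$ gives that $((\sk,\pk),\ctxt,\HEEval(f(\cdot,y;r),\ctxt,\pk)) \overset{s}{\approx} ((\sk,\pk),\ctxt,\Sim_{\mathsf{cp}}(f(x,y;r),\pk))$, which is precisely $\mathsf{view}_\Client \overset{s}{\approx} \Sim_1(x,f(x,y;r))$; since statistical indistinguishability implies computational indistinguishability, the required relation holds.

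\textbf{Main obstacle.} The delicate point is that $\mathsf{view}_\Client$ as I wrote it also contains the encryption randomness, whereas Definition~\ref{def:circuitpriv} only places the key pair, the input ciphertexts, and the evaluated ciphertext in the compared tuple. I would resolve this by noting it is without loss of generality: $\Sim_1$ produces $\ctxt$ by the honest encryption procedure with coins of its own choosing, so conditioned on $(\sk,\pk,\text{enc-coins},\ctxt)$ the two worlds differ only in the $\ctxt'$ coordinate, to which Definition~\ref{def:circuitpriv} directly applies (equivalently, one may use the standard formulation of circuit privacy in which the encryption coins are revealed). A second, minor remark I would include is that completeness of $\Pi$ — that the Client outputs $f(x,y;r)$ — is not part of Definition~\ref{def:sec} but follows immediately from the correctness of $\HE$. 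Combining the two reductions yields the claim.
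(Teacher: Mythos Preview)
Your proposal is correct and follows essentially the same two-part strategy as the paper: an IND-CPA reduction for the Server's view (the paper encrypts a uniformly random $x'\sample\{0,1\}^{|x|}$ rather than $0^{|x|}$, but this is immaterial) and a direct invocation of the circuit-privacy simulator for the Client's view. If anything, your treatment is slightly more careful than the paper's, which neither discusses the encryption coins in $\mathsf{view}_\Client$ nor keeps the simulator signatures aligned with Definition~\ref{def:sec}; your observation that $\ctxt'$ in both worlds depends on the coins only through $\ctxt$ is exactly what justifies extending the circuit-privacy guarantee to the full view.
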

\begin{proof}
This proof is divided into two parts: We show by means of a direct reduction to the IND-CPA and circuit privacy properties of $\HE$ that $\Pi$ is private against the Server and against the Client.

The client's view $\mathsf{view}_\Client[\Client(x)\leftrightarrow\Server(f,y,r)]$ consists of tuples of the form $(\sk,\pk,x,\ctxt,\ctxt')$.
Let $\Sim$ be a simulator for circuit privacy [Definition~\ref{def:circuitpriv}]. Let $\Sim_1(\sk,\pk,x,$ $f(x,y;r))$ be an algorithm that uses $\Sim(f(x,y;r),\pk))$ as a subroutine to produce $\ctxt'$ and outputs $(\sk,\pk,x,\HEEnc(x,\pk),$ $\ctxt')$.
Suppose that $\Pi$ is not secure and that there is a successful distinguisher $\mathcal{D}_1$ that distinguishes $\mathsf{view}_\Client[\A(x)\leftrightarrow\Server(f,y,r)]$ from $\Sim_1(\sk,\pk,x,f(x,y;r))$ with non-negligible probability.
If $\mathcal{D}_1$ exists, then a successful distinguisher $\mathcal{D}$ against simulator $\Sim$ also exists, which contradicts the circuit privacy property of $\HE$: Distinguisher $\mathcal{D}(\sk,\pk,\ctxt,\ctxt')$ decrypts $x \get \HEDec(\ctxt,\sk)$, runs $\mathcal{D}_1(\sk,\pk,x,\ctxt,\ctxt')$ as a subroutine, and outputs whatever $\mathcal{D}_1$ outputs, winning whenever $\mathcal{D}_1$'s guess is correct.

The server's view $\mathsf{view}_\Server[\Client(x)\leftrightarrow\Server(f,y,r)]$ consists of tuples of the form $(\pk,\ctxt)$.
Let $\Sim_2(\pk,|x|)$ be the algorithm that samples $x' \sample \{0,1\}^{|x|}$, encrypts $\ctxt \sample$ $\HEEnc(x',\pk)$ and returns $(\pk,\ctxt)$.
Suppose that $\Pi$ is not secure and that there is a successful distinguisher $\mathcal{D}_2$ that distinguishes $\mathsf{view}_\Server[\Client(x)\leftrightarrow\A(f,y,r)]$ from $\Sim_2(\pk,\ctxt)$ with non-negligible probability. Then, we can construct an adversary $\mathcal{B}$ against the IND-CPA property of $\HE$ as follows: $\mathcal{B}$ gets $x$ from $\Client$ and samples $x' \sample \{0,1\}^{|x|}$ as $\Sim_2$. $\mathcal{B}$ then asks to be challenged on the pair $(x,x')$ and gets the ciphertext $\ctxt$ that encrypts either $x$ or $x'$. $\mathcal{B}$ runs distinguisher $\mathcal{D}_2(\pk,\ctxt)$ as a subroutine and outputs $\mathcal{D}_2$'s guess as its own. $\mathcal{B}$ perfectly simulates for $\mathcal{D}_2$ either the real-world or the ideal-world of simulator $\Sim_2$, winning the IND-CPA game whenever $\mathcal{D}_2$'s guess is correct.
\end{proof}

\emph{Remarks on our security model}. First, the security guarantees extend those of 2PC, as {\it circuit privacy} allows us to hide the ML algorithm $f$ from the Client, in addition to protecting the pre-trained model $y$ and random coins $r$. Secondly, if the underlying $\HE$ scheme is circuit-private against malicious adversaries~\cite{OPP2014} then our protocol automatically satisfies a definition of similar ``flavor'' and becomes secure against malicious adversaries. Note that in our CryptoRec protocol the Client talks first and $\mathsf{view}_\Server[\Client(x)\leftrightarrow\A(f,y,r)]$ is the same regardless $\A$ is malicious or semi-honest. Therefore, we only need to consider security against malicious adversaries for circuit privacy. Third, we cannot guarantee the predictions that the Server computes are correct (In fact, the Server has incentives to compute predictions correctly). Security guarantees that nothing about the Client's input is revealed to the Server and nothing about the Server's prediction model is revealed to the Client beyond what can be inferred by the prediction. Lastly, since we rely on a standard definition of circuit privacy, which pertains to the case of a single use of each public key, the Client needs to generate a new key-pair for each query.

\section{CryptoRec}
\label{sec:cryptorec0}
In this section, we present CryptoRec, a non-interactive secure 2PC protocol built on top of a new homomorphic encryption-friendly recommender system, referred to as CryptoRec's model. In Section~\ref{sec:cfrs}, we introduce CryptoRec's model. In Section~\ref{sec:training} we explain how to train the model and learn the parameters $\Theta$. Finally, in Section~\ref{sec:protocols}, we combine the prediction procedure of our model with homomorphic encryption. This gives rise to our CryptoRec protocol. We also consider a second variant of the protocol in which the model parameters $\Theta$ are retrained before computing recommendations. The re-training occurs in encrypted form, therefore it results in better accuracy without compromising security. Naturally, the computational cost on the Server side is considerably heavier.

\subsection{CryptoRec's Model}
\label{sec:cfrs}
Existing collaborative filtering (CF) technologies require non-linear operations or re-training with the Client's data~\cite{zhang2017deep,koren2009matrix,mnih2008probabilistic,zhang2017autosvd}. Directly applying the existing CFs to encrypted data leads to severe efficiency problems. To address this issue, we propose CryptoRec's model, a new homomorphic encryption friendly recommender system. It models user-item interaction behaviors in an item-only latent feature space. This means that the user features do not exist in the latent feature space, the model will automatically compute the user features by aggregating pre-learned item features. This property allows the Server with a pre-trained model to provide recommendations for the Client without having to re-train the model with the Client's data. Algebraic operations in CryptoRec's model are constrained to only additions and multiplications, thus CryptoRec's model is straightforwardly compatible with homomorphic encryption schemes.

We exploit the fact that a user profile is essentially identified by items that the user has rated, to construct personalized user features in an item-only latent feature space. In particular, we model the personalized user features $\textbf{p}_u$ by aggregating pre-learned item features $\textbf{Q}= \{ \textbf{q}_i\}_{i=1}^{m}$ as follows,
\begin{equation}
\label{eq:i2u}
\textbf{p}_u=  \textbf{r}_u\textbf{Q}
\end{equation}
therefore we can approximate an observed rating $r_{ui}$ by
\begin{equation}
\label{eq:ca1}
r_{ui} \approx \hat{r}_{ui} = \underbrace{(\textbf{r}_u\textbf{Q})}_{\textbf{p}_u}\textbf{q}_i^T
\end{equation}

 Using only a single latent feature space $\textbf{Q}$ to model a large number of ratings often leads to an information bottleneck. To address this issue, we relax the item features  which were used to construct user features $\textbf{P}$, and redefine the Equation (\ref{eq:ca1}) as
\begin{equation}
\label{eq:ca2}
r_{ui} \approx  \underbrace{(\textbf{r}_u\textbf{A})}_{\textbf{p}_u}\textbf{q}_i^T
\end{equation}
Note that $\textbf{A} \in \mathbb{R}^{m\times d}$ is a new item feature space.

We now have the basic form of CryptoRec's model which has an item-only latent feature space and relies only on addition and multiplication operations. However, it is not robust enough in practice due to the high variance of individual user or item behaviors,  commonly known as biases. For example, real-world datasets exhibit large systematic tendencies for some users to give higher ratings than others, and for some items to receive higher ratings than others~\cite{koren2009matrix}. To address this issue, a common approach is to identify the portion of these ratings that individual user or item biases can explain, subjecting only the true interaction portion of the ratings to latent factor modeling~\cite{koren2009matrix}. The user and item biases are often approximated by
\begin{equation}
\label{eq:ebb}
b_{ui} = \mu+b_u+b_i
\end{equation}
where $\mu = \frac{\sum_{(u,i)\in \textbf{R}}r_{ui}}{N}$ is the global rating average, $N$ is the number of observed ratings. $b_u $ and  $b_i$  approximate user bias and  item biases, respectively. To obtain $b_u$ and $b_i$, we can either compute $b_u = \bar{r}_u - \mu$ and $b_i = \bar{r}_i - \mu$~\cite{koren2009matrix}, or directly learn their values from a dataset~\cite{koren2008factorization}. The former ignores the global effects upon a single user or item; the latter models both the individual behaviors and global effects, but sometimes it leads to an early overfitting. To maintain both reliability and accuracy, we separately model the individual behaviors and global effects as follows
\begin{equation}
\label{eq:ebb2}
b_{ui} = \mu+b_u+b_i+b_u^*+b_i^*
\end{equation}
where $b_u$ and $b_i$ are computed as  $b_u = \bar{r}_u - \mu$ and $b_i = \bar{r}_i - \mu$. $b_u^*$ and $b_i^*$ are the parameters directly learned from the dataset to capture only the impact of global effects upon a single user and item, respectively.

We combine the biases approximator (Equation (\ref{eq:ebb2})) and the user-item interaction approximator (Equation (\ref{eq:ca2})) to formalize the final CryptoRec's model as following,
\begin{equation}
\label{eq:ebm1}
r_{ui} \approx \hat{r}_{ui} = \underbrace{\mu+b_u+b_i+b_u^*+b_i^*}_{biases} +\underbrace{ (\textbf{r}_u\textbf{A})\textbf{q}_i^T}_{interaction}
\end{equation}
As such, the user preference estimation is separated into two parts: biases approximator and user-item interaction approximator. This allows only the true user-item interaction being modeled by the factor machine (i.e., Equation (\ref{eq:ca2})).  The model parameters of CryptoRec's model are $\Theta = \{\textbf{A}, \textbf{Q},  \textbf{b}_u^*, \textbf{b}_i^* \}$ \footnote{For the convenience of notation in describing the algorithms later on, we omit  $\{\mu, \textbf{b}_u, \textbf{b}_i \} $ from the model parameters $\Theta$  in favor of a slightly more succinct notation.  Note that $\{\mu, \textbf{b}_u, \textbf{b}_i \} $ are not learned by training procedure either. }, where $\textbf{b}_u^* = \{b_u^* \}_{u=1}^n$ , $\textbf{b}_i^* = \{b_i^* \}_{i=1}^m$.

\subsection{Training}
\label{sec:training}
The model parameters $\Theta = \{\textbf{A},\textbf{Q}, \textbf{b}_u^*, \textbf{b}_i^* \}$ are learned by solving the regularized least squares objective function,
\begin{equation}
\label{eq:obj}
\begin{split}
\mathcal{L} = & \sum_{u=1}^{n}  ||(\hat{\textbf{r}}_{u}-\textbf{r}_{u})\cdot \boldsymbol{\phi}_u||^2 \\ + & \lambda\cdot (\left \| \textbf{A}\right \|^2 + \left \|\textbf{Q}  \right \|^2  + \left \| \textbf{b}_u^* \right \|^2 + \left \| \textbf{b}_i^* \right \|^2)
\end{split}
\end{equation}
where $\hat{r}_{ui}$ is defined in Equation (\ref{eq:ebm1}),  $\boldsymbol{\phi}_u = \{\phi_{ui} \}_{i=1}^m$ and  $(\hat{\textbf{r}}_{u}-\textbf{r}_{u})\cdot \boldsymbol{\phi}_u$ denotes $\{(\hat{r}_{ui}-r_{ui})\phi_{ui} \}_{i=1}^m$. If user $u$ rated item $i$, then $\phi_{ui} =1$, otherwise, we let $\phi_{ui}=0$ and  $r_{ui}=0$. We use $\phi_{ui}$ to remove the (incorrect) gradients computed on unobserved ratings. The constant $\lambda$ controls the extent of regularization. When performing training on plaintext dataset, the Server can compute $\phi_{ui}$ by itself, avoiding the unnecessary gradient computations on unobserved ratings.

As shown in Equation (\ref{eq:ebm1}), CryptoRec's model is in fact a two-layer network. The first layer outputs the user feature vector $\textbf{p}_u=\textbf{r}_u\textbf{A}$ and the second layer integrates the user features, item features and biases to estimate the user preferences. Back-propagation~\cite{lecun1990handwritten} is a standard method used in neural networks to calculate a gradient that is needed in the calculation of the values of parameters in the network. Simply put, we first compute the predictions given the input data (i.e., a forward pass); and then we calculate the total error according to the objective function (e.g., Equation (13)). Lastly, we compute the gradient of each trainable parameter using the error back-propagated through the network layers (i.e., a backward pass). Using the back-propagation method, we have the gradient of each model parameter of CrypotRec's model as follows,

\begin{equation}
\label{eq:grad}
\begin{split}
& \Delta \textbf{A} = \frac{\partial \mathcal{L}  }{\partial \textbf{p}_u }\cdot\frac{\partial \textbf{p}_u  }{\partial \textbf{A} } =    [(\textbf{e}_u \cdot \boldsymbol{\phi}_u )\textbf{Q}] \circledast \textbf{r}_u^T +\lambda \cdot \textbf{A}  \\
& \Delta \textbf{q}_i = \frac{\partial \mathcal{L}  }{\partial \textbf{q}_i} =  \phi_{ui}\cdot (e_{ui}\cdot(\textbf{r}_u\textbf{A})+\lambda \cdot \textbf{q}_i) \\
& \Delta b_u^* = \frac{\partial \mathcal{L}  }{\partial b_u^* } =   \textbf{e}_u \boldsymbol{\phi}_u + \lambda  \cdot b_u^*\\
& \Delta b_i^* = \frac{\partial \mathcal{L}  }{\partial b_i^* } =  \textbf{e}_{i}\boldsymbol{\phi}_{i}+ \lambda  \cdot b_i^*\\
\end{split}
\end{equation}
where $e_{ui} = \hat{r}_{ui}-r_{ui}$,  $\textbf{e}_u = \{e_{ui} \}_{i=1}^m$ ,  $\textbf{e}_i = \{e_{ui} \}_{u=1}^n$, and $\textbf{e}_u \cdot \boldsymbol{\phi}_u=\{e_{ui}\cdot\phi_{ui} \}_{i=1}^m$. $\circledast $ denotes outer product \footnote{Given two vectors $\textbf{x}^{1\times m}$ and $\textbf{y}^{n\times 1}, ( \textbf{x} \circledast  \textbf{y})_{ij}=x_iy_j$ }.  We randomly divide the dataset into multiple batches. In the training phase, we compute gradient by batch and update the model parameters by moving in the opposite direction of the gradient (i.e., gradient descent optimization algorithm). Algorithm \ref{alg:train} outlines the model training procedure. The learning rate $\eta$ is used to control the speed of model updates. Note that the training procedure only relies on addition and multiplication operations.

\begin{algorithm}[ht]
\caption{CryptoRec's model training procedure $\mathcal{T}$ }
\label{alg:train}
\begin{flushleft}
 \textbf{Input:   }  Rating  $\textbf{R}$, rating indicator $\boldsymbol{\Phi} $ , user mean ratings $\bar{\textbf{r}}_u = \{\bar{r}_u \}_{u=1}^n$,   $\Theta = \{\textbf{A}^{(0)},\textbf{Q}^{(0)}, \textbf{b}_u^{*(0)}, \textbf{b}_i^{*(0)} \}$ \\
 \textbf{Output:}   Optimized   $\Theta = \{\textbf{A}^{(K)}, \textbf{Q}^{(K)},  \textbf{b}_u^{*(K)}, \textbf{b}_i^{*(K)} \}$
\end{flushleft}
\algrule
\begin{algorithmic}[1]
\Procedure{$\mathcal{T}$}{$ \{ \textbf{R},  \boldsymbol{\Phi}, \bar{\textbf{r}}_u \}, \Theta$}
\For{\texttt{$k \gets \{ 1,2,\cdots,K \}$ }}
        \State $\textbf{A}^{(k)} \gets \textbf{A}^{(k-1)} -  \eta \cdot \Delta \textbf{A}^{(k-1)}$ \Comment{$\eta$: learning rate}
        \State $\textbf{Q}^{(k)} \gets \textbf{Q}^{(k-1)} -  \eta \cdot \Delta \textbf{Q}^{(k-1)}$
        \State $\textbf{b}_u^{*(k)} \gets \textbf{b}_u^{*(k-1)}  -  \eta \cdot \Delta \textbf{b}_u^{*(k-1)} $
        \State $\textbf{b}_i^{*(k)} \gets \textbf{b}_i^{*(k-1)} -  \eta\cdot \Delta \textbf{b}_i^{*(k-1)} $
\EndFor
\State \textbf{return} $\Theta = \{\textbf{A}^{(K)}, \textbf{Q}^{(K)}, \textbf{b}_u^{*(K)}, \textbf{b}_i^{*(K)} \}$
\EndProcedure
\end{algorithmic}
\end{algorithm}

\subsection{Two  Secure Protocols}
\label{sec:protocols}
In this section, we introduce two CryptoRec secure protocols.  In the first protocol, the Server uses pre-trained model parameters $\Theta$ and directly takes as input the Client's encrypted rating vector to compute recommendations. In the second protocol, the Server re-trains the  model parameters $\Theta$ before computing recommendations.  For the sake of clarity, we denote the Client as $v$ in this section.

\emph{Secure protocol with a pre-trained model}. Figure~\ref{tab:woretmodel} describes the security protocol for prediction with pre-trained model parameters $\Theta$. The Client $v$ sends $ \llbracket \textbf{r}_v \rrbracket$ and  $ \llbracket  \bar{r}_v  \rrbracket$  to the Server, which executes the prediction process $\mathcal{P}$ (described in Figure~\ref{tab:woretmodel}) and returns the encrypted results $\hat{\textbf{r}}_v$.

\begin{figure}[ht]
\centering\framebox{
\begin{tabular}{lll}
\textbf{User}  (\textbf{Input:} $\textbf{r}_v, \bar{r}_v$)                                                             &                                                                           &\textbf{ Server}  (\textbf{ Input:} $\Theta$)                                                                                \\ \hline
$\llbracket \textbf{r}_v \rrbracket \sample \HEEnc(\textbf{r}_v,\pk)$             &                                                                           &                                                          \\
$\llbracket \bar{r}_v \rrbracket \sample  \HEEnc(\bar{r}_v,\pk)$ &                                                                           &                                           \\
$\llbracket \chi  \rrbracket \gets \{ \llbracket \textbf{r}_v \rrbracket,\llbracket \bar{r}_v \rrbracket \}$ &                                                                           &                                             \\
                                                                 & $\xrightarrow{\llbracket \chi  \rrbracket}$ &                                                                                       \\
                                                                 &                                                                           & $\llbracket \hat{\textbf{r}}_v \rrbracket \leftarrow \mathcal{P}(\llbracket\chi \rrbracket, \Theta)$ \\
                                                                 & $\xleftarrow{\llbracket \hat{\textbf{r}}_v \rrbracket}$                             &                                                                                       \\
$\hat{\textbf{r}}_v \leftarrow \HEDec(\llbracket \hat{\textbf{r}}_v \rrbracket,\sk)$   &                                                                           &                                                                                       
\end{tabular}
}
\caption{CryptoRec with pre-trained $\Theta = \{\textbf{A}, \textbf{Q}, \textbf{b}_u^{*}, \textbf{b}_i^{*} \}$}
\label{tab:woretmodel}
\end{figure}

\begin{algorithm}[ht]
\caption{CryptoRec's model prediction procedure $\mathcal{P}$ }
\label{alg:predc}
\begin{flushleft}
 \textbf{Input:   }  Ratings  $\llbracket \textbf{r}_v \rrbracket,\llbracket \bar{r}_v \rrbracket$ ,    $\Theta = \{\textbf{A}, \textbf{Q}, \textbf{b}_u^{*}, \textbf{b}_i^{*},  \mu, \textbf{b}_u, \textbf{b}_i \}$ \\
 \textbf{Output:}   Recommendations $\llbracket \hat{\textbf{r}}_v \rrbracket$
\end{flushleft}
\algrule
\begin{algorithmic}[1]
\Procedure{$\mathcal{P}$}{$\{ \llbracket \textbf{r}_v \rrbracket,\llbracket \bar{r}_v \rrbracket \}, \Theta$}
\If{$ b_v^* \notin  \textbf{b}_u^* $} \Comment{$b_v^* \in \textbf{b}_u^*$  if re-traininng the $\Theta$ }
    \State $b_v^* \gets \frac{\sum_{u=1}^{n}b_u^*}{n}$
\EndIf
\State $\llbracket  \textbf{p}_v \rrbracket \gets \llbracket \textbf{r}_v \rrbracket \textbf{A}$ \Comment{HE dot-product using $\odot$ and $\oplus$}
\For{\texttt{$ i \gets [1,2,\cdots,m]$ }}
        \State $\llbracket x_1 \rrbracket  \gets ( b_i + b_i^*+ b_v^*) \oplus  \llbracket \bar{r}_v \rrbracket $  \Comment{$b_v  \gets   \bar{r}_v - \mu $}
        \State $ \llbracket x_2 \rrbracket \gets  \llbracket \textbf{p}_{v} \rrbracket  \textbf{q}_i^T$
        \State $\llbracket \hat{r}_{vi} \rrbracket \gets \llbracket x_1 \rrbracket  \oplus  \llbracket x_2 \rrbracket  $
        \State $ \llbracket \hat{\textbf{r}}_{v} \rrbracket[i] \gets \llbracket \hat{r}_{vi} \rrbracket $
\EndFor
\State \textbf{return} $ \llbracket \hat{\textbf{r}}_{v} \rrbracket$
\EndProcedure
\end{algorithmic}
\end{algorithm}

We present the prediction process $\mathcal{P}$ of CryptoRec's model in Algorithm \ref{alg:predc}. The computation is straightforward since CryptoRec's model contains only addition and multiplication operations, as we can observe in Equation (\ref{eq:ebm1}). The inputs of this algorithm are the Client's encrypted rating vector $ \llbracket \textbf{r}_v \rrbracket$,  the average rating $ \llbracket  \bar{r}_v  \rrbracket$ and model parameters $\Theta$. Since the Client's $b_v^*$  is unknown to the Server, $b_v^*$ is set to the average value of $\textbf{b}_u^*$ (line 2-3, Algorithm \ref{alg:predc}).

\emph{Secure protocol with re-training}. In order to achieve the most accurate predictions, we introduce a re-training process to the CryptoRec protocol, shown in Figure~\ref{tab:retmodel}. Compared to the secure protocol without a re-training step (using only a pre-trained model, Figure~\ref{tab:woretmodel}), there are two differences: The first one is that, besides $ \llbracket \textbf{r}_v \rrbracket$ and $ \llbracket  \bar{r}_v  \rrbracket$, the user also sends the encrypted indication vector $\llbracket  \boldsymbol{\phi}_v \rrbracket$ to the Server, which will be used in the training procedure $\mathcal{T}$ . The second one is that, before computing recommendations using $\mathcal{P}$ (Algorithm \ref{alg:predc}), the Server re-trains the model parameters $\Theta$ with the Client's inputs.

\begin{figure}[ht]
\hspace*{-10px}
\centering\framebox{
\begin{tabular}{lll}
\textbf{User}  (\textbf{Input:} $\textbf{r}_v, \boldsymbol{\phi}_v, \bar{r}_v$)                                                               &                                                                           & \textbf{ Server}  (\textbf{ Input:} $\Theta$)                                                                                 \\ \hline
$\llbracket \textbf{r}_v \rrbracket \sample  \HEEnc(\textbf{r}_v,\pk)$             &                                                                           &                                                          \\
$\llbracket \boldsymbol{\phi}_v \rrbracket \sample \HEEnc(\boldsymbol{\phi}_v,\pk)$     &      &  \\
$\llbracket \bar{r}_v \rrbracket \sample  \HEEnc(\bar{r}_v,\pk)$ &                                                                           &                                             \\
$\llbracket \chi  \rrbracket \gets \{ \llbracket \textbf{r}_v \rrbracket,\llbracket \boldsymbol{\phi}_v \rrbracket,\llbracket \bar{r}_v \rrbracket \}$ &                                                                           &                                             \\
                                                                 & $\xrightarrow{\llbracket \chi  \rrbracket}$ &                                                                                       \\
                                                                 &                                                                           & $\llbracket \Theta \rrbracket \leftarrow \mathcal{T}(\llbracket\chi \rrbracket ,\Theta)$                    \\
                                                                 &                                                                           & $\llbracket\chi_1\rrbracket  \gets \{ \llbracket \textbf{r}_v \rrbracket, \llbracket \bar{r}_v \rrbracket \} $                     \\
                                                                 &                                                                           & $\llbracket \hat{\textbf{r}}_v \rrbracket \leftarrow \mathcal{P}(\llbracket\chi_1\rrbracket, \llbracket \Theta \rrbracket)$ \\
                                                                 & $\xleftarrow{\llbracket \hat{\textbf{r}}_v \rrbracket}$                            &                                                                                       \\
$\hat{\textbf{r}}_v \sample \HEDec(\llbracket \hat{\textbf{r}}_v \rrbracket,\sk)$   &                                                                           &                                                                                       \\ 
\end{tabular}
}
\caption{CryptoRec with re-training  $\Theta = \{\textbf{A}, \textbf{Q}, \textbf{b}_u^{*}, \textbf{b}_i^{*} \}$}
\label{tab:retmodel}
\end{figure} 

The training procedure $\mathcal{T}$ is described in Algorithm \ref{alg:train}, takes advantage of homomophic properties of the encryption scheme.  It is worth stressing that in the re-training protocol, we re-train the model parameters $\Theta$ with only the Client's data, not the Server's dataset. For efficiency reasons, the Server should pre-train the model parameters $\Theta$ on its dataset. Note that after the re-training process, the model parameters $\Theta$ are encrypted. So the related algebraic operations in the prediction process $\mathcal{P}$ should be also updated to their corresponding homomorphic operations.

It's important to notice that both protocols presented here are instances of the general non-interactive 2PC protocol of Section~\ref{sec:security}, in which the Client's input $x$ is set as $(\textbf{r}_v, \boldsymbol{\phi}_v, \bar{r}_v)$, and the Server's inputs $y$ and $r$ are set as $(\textbf{R}, \boldsymbol{\Phi}, \bar{\textbf{r}}_v)$ and as the random coins of the training procedure $\mathcal{T}$. The function $f$ that is homomorphically evaluated is set as a composition of the training procedure $\mathcal{T}$ and the prediction procedure $\mathcal{P}$, i.e. in the first CryptoRec protocol, without any re-training step, $f(x,y;r) := \mathcal{P}(x,\mathcal{T}(y;r))$ and in second CryptoRec protocol $f(x,y;(r_1,r_2)) := \mathcal{P}(x,\mathcal{T}(x,\mathcal{T}(y;r_1);r_2))$. Therefore, by Theorem~\ref{theorem:security}, the security guarantees of the generic 2PC protocol of Section~\ref{sec:security} are preserved in our protocols.

\section{Experiment Setup}
We evaluate the accuracy and efficiency performances of CryptoRec on the rating prediction task and compare CryptoRec with several state-of-the-art collaborative filtering algorithms, including item-based NBM (I-NBM)~\cite{desrosiers2011comprehensive}, biased matrix factorization (BiasedMF)~\cite{koren2009matrix}, user-based AutoRec (U-AutoRec)~\cite{sedhain2015autorec} and item-based AutoRec (I-AutoRec)~\cite{sedhain2015autorec}. We test these models on three datasets which are widely used for recommender systems performance evaluation, as shown in Table \ref{tab:ds3}. The dataset ml1m~\cite{data:ml1m} contains 1 million ratings; yahoo~\cite{data:yahoo} contains 0.21 million ratings;  For netflix~\cite{data:netflix} dataset, we select 11, 000 users who have given 1.2 million ratings to 4,768 items, where each user has at least 30 ratings.  The testbed is a single PC with 8 Intel (R) Xeon(R) CPUs running at 3.5 GHz, with 32 GB of RAM, running the Ubuntu 16.04 operating system. All the 8 CPUs are used in the experiments.

\begin{table}[h]
\centering
\begin{tabular}{|l|l|l|l|l|}
\hline
        & user \# & item \# & density & scale     \\ \hline
netflix & 11,000 & 4,768  & 2.17\%  & {[}1,5{]} \\ \hline
ml1m    & 6,040   & 3,952   & 4.2\%   & {[}1,5{]} \\ \hline
yahoo   & 7,637   & 3,791   & 0.72\%  & {[}1,5{]} \\ \hline
\end{tabular}
\caption{Datasets used for benchmarking}
\label{tab:ds3}
\end{table}

\subsection{Dataset Splitting}
For each dataset, we randomly split all the users into a training set (80\%) and a validation set (20\%), and then we  randomly divide each user data vector of the validation set into a feeding set (90\%) and a testing set (10\%). The training set simulates the Server's dataset, the feeding set simulates the rating data of the Client, and the testing set is used for accuracy evaluation. In the experiments, the Server trains recommendation models with its dataset. The Client sends its rating data vector to the Server, as a query, to get recommendations. For the models which have to be trained with the Client's data (the feeding set), we directly append the feeding set to the training set.  These models,  which require training from scratch with the Client's input, are identified in Section \ref{sec:modelclar}. For all the models, we repeat the accuracy evaluation experiments five times on each dataset. The root mean square error (RMSE) is adopted as the accuracy metric,
$$RMSE=\sqrt{\frac{\sum_{(u,i)\in \mathcal{D}}(\hat{r}_{ui}-r_{ui})^2}{|\mathcal{D}|}}  $$
where $\mathcal{D}$ is the testing set,  $|\mathcal{D}|$ is the number of ratings in the testing set. The lower the RMSE value, the higher the accuracy performance is.

$ $

\subsection{Remarks on Model Training}
\label{sec:modelclar}

\begin{table*}[h]
\centering
\begin{tabular}{|c|l|c|l|}
\hline
                 & \multicolumn{1}{c|}{NBM} & \multicolumn{1}{c|}{MF} & \multicolumn{1}{c|}{Neural Network Based} \\ \hline
 w/o Client & I-NBM\cite{desrosiers2011comprehensive}                    &        $\varnothing$                 & U-AutoRec\cite{sedhain2015autorec},\cite{strub2015collaborative,ouyang2014autoencoder,salakhutdinov2007restricted}                 \\ \hline
 w/ Client  & U-NBM\cite{desrosiers2011comprehensive}                    & BiasedMF\cite{koren2009matrix},\cite{koren2008factorization,mnih2008probabilistic}     &  I-AutoRec\cite{sedhain2015autorec},  \cite{he2017neural,zhang2017autosvd,kim2016convolutional,wang2015collaborative} \\ \hline
\end{tabular}
\caption{Remarks on model training }
\label{tab:models}
\end{table*}

\begin{table*}[t]
\centering
\begin{tabular}{|l|c|c|c|c|c|c|}
\hline
\multirow{2}{*}{} & \multicolumn{2}{c|}{netflix}             & \multicolumn{2}{c|}{ml1m}                & \multicolumn{2}{c|}{yahoo}               \\ \cline{2-7}
                  & RMSE                      &loss \%      & RMSE                      & loss \%      & RMSE                      &loss \%      \\ \hline
I-NBM             & 0.9115$\pm$0.007          & 9.4          & 0.8872$\pm$0.012          & 6.0          & 0.9899$\pm$0.017          & 0.2          \\ \hline
U-AutoRec         & 0.9762$\pm$0.012          & 17.1         & 0.9526$\pm$0.007          & 13.9         & 1.0621$\pm$0.014          & 7.5          \\ \hline
CryptoRec         & \textbf{0.8586$\pm$0.005} & \textbf{3.0} & \textbf{0.8781$\pm$0.007} & \textbf{4.9} & \textbf{0.9888$\pm$0.011} & \textbf{0.1} \\ \hline
\emph{I-AutoRec} & \emph{0.8334$\pm$0.006}& \emph{0} & \emph{0.8367$\pm$0.004} & \emph{0} & \emph{0.9880$\pm$0.015}& \emph{0}  \\ \hline
\end{tabular}
\caption{Accuracy comparison with pre-trained models. I-AutoRec is the accuracy benchmark.}
\label{tab:compwo}
\end{table*}

By investigating recommender systems which aim to provide accurate rating predictions, we informally classify these models into two categories as shown in Table \ref{tab:models}. the category ``w/o Client" contains the models which allow offering recommendations with a pre-trained model while the Client's private data is not in the training set; the category ``w/ Client" includes the models which have to be trained or re-trained with the Client's data. We refer interested readers to the two comprehensive reviews~\cite{su2009survey,zhang2017deep} for more details.

The models which fall into ``w/ Client" category often have one or both of the two following characteristics,  
\begin{itemize}
\item User and item features are jointly learned in the training phase, such as MF and its variants~\cite{koren2009matrix,koren2008factorization,mnih2008probabilistic}.
\item The input is an item rating vector ($\textbf{r}_i$), such as U-NBM~\cite{desrosiers2011comprehensive} and I-AutoRec~\cite{sedhain2015autorec}.
\end{itemize}

The models in the category of ``w/o Client"  often take as input a user preference vector (e.g., $\textbf{r}_u$). The personalized user features are automatically captured in the prediction phase, such as I-NBM~\cite{desrosiers2011comprehensive}, U-AutoRec~\cite{sedhain2015autorec}, and our proposed CryptoRec's model.

We select I-NBM, BiasedMF (the representatives of traditional recommender systems), and U-AutoRec, I-AutoRec (the representatives of neural network based recommender systems) as the comparison baselines.

\subsection{Accuracy Benchmark}
Without considering privacy,  the model I-AutoRec achieves state-of-the-art accuracy (RMSE) performance~\cite{sedhain2015autorec}.  As such, we adopt I-AutoRec as the accuracy benchmark model and train it from scratch in a standard machine learning setting. Table \ref{tab:bench} presents the accuracy performance of I-AutoRec on the selected datasets.

\begin{table}[h!]
\centering
\begin{tabular}{|l|l|l|l|}
\hline
\textbf{} & netflix          & ml1m             & yahoo            \\ \hline
I-AutoRec & 0.8334$\pm$0.006 & 0.8367$\pm$0.004 & 0.9880$\pm$0.015 \\ \hline
\end{tabular}
\caption{Accuracy benchmark (RMSE) on plaintext}
\label{tab:bench}
\end{table}

\subsection{Hyper-parameter Setting}
To train the CryptoRec's model, we perform a grid search for each hyper-parameter. In particular, for the learning rate $\eta$ we search in $\{0.0001, 0.0002, 0.0004 \}$; for the regular parameter $\lambda$ we search in $\{0.00001, 0.00002, 0.00004 \}$; for the dimension of the features $\{\textbf{A}, \textbf{Q}\}$, we search in $\{300, 400, 500, 600 \}$. As a result, we choose $\eta=0.0002$, $\lambda=0.00002$ and the dimension $d=500$. To train the baseline models, we also perform a grid search around the suggested settings given in their original papers, as the dataset splitting is not the same. By doing so, we have a fair comparison.

\section{Performance Evaluation}
\label{sec:pe}
In this section, we first evaluate and compare the accuracy and efficiency performance of using only a pre-trained model (Subsection~\ref{sec:wort}). Then we investigate the accuracy and efficiency performance of a re-training process (Subsection~\ref{sec:wrt}).

\subsection{Comparison with Pre-trained Models}
\label{sec:wort}
As described in Section \ref{sec:modelclar}, CryptoRec's model, I-NBM, and BiasedMF allow computing recommendation with a pre-trained model, and the Client's private data is not in their training set. In this section, we first verify and compare the accuracy performance by directly using the pre-trained models (without re-training the models with the Client's data). Then we analyze and compare the computational complexity of responding one prediction query in a private manner. Compared to the complexity of homomorphic operations, algebraic operations in the plaintext space are trivial. As such, for the computational complexity analysis, we only count in the operations over encrypted data,  i.e., operations between two ciphertexts (i.e., $\oplus, \otimes$) and multiplicative operations between a plaintext and a ciphertext (i.e., $\odot$).

\subsubsection{Accuracy Comparison}
Table \ref{tab:compwo} presents the accuracy performance of each model. Compared to the benchmark (Table \ref{tab:bench}), the accuracy of the three models is compromised to some extent (column loss\% ), and CryptoRec has the least loss. Specifically, CryptoRec loses 3.0\% accuracy on netflix, 4.9 \% on ml1m, and 0.1\% on yahoo. Clearly, CryptoRec is able to provide a promising accuracy guarantee to the Client by using only a pre-trained model.

\begin{table}[h!]
\centering
\begin{tabular}{|l|c|c|c|}
\hline
          & $\oplus$      & $\odot$        & Sigmoid \\ \hline
I-NBM     & $\mathcal{O}(m^2)$ & $\mathcal{O}(m^2)$    & $\varnothing$         \\ \hline
U-AutoRec & $\mathcal{O}(md)$     & $\mathcal{O}(md)$            & $\mathcal{O}(md)$       \\ \hline
CryptoRec      & $\mathcal{O}(md)$         & $\mathcal{O} (md)$           & $\varnothing$        \\ \hline
\end{tabular}
\caption{Computational complexity comparison of using pre-trained models.}
\label{tab:ccc}
\end{table}

\subsubsection{Computational Complexity Comparison}
To respond to a query from the Client, the Server has to predict the Client's preferences on all the items since it gets only an encrypted rating vector ($\llbracket \textbf{r}_u \rrbracket$). We analyze each model's computational complexity of answering one query, as shown Table \ref{tab:ccc}. Among the three models, I-NBM consumes more homomorphic additions ($\oplus$) and multiplications ($\odot$); U-AutoRec costs a similar number of $\oplus$ and $\otimes$ than CryptoRec's model,  but it introduces $\mathcal{O}(md)$ non-linear transformations (i.e., Sigmoid). Computing the Sigmoid function often relies on secure multiparty computation (SMC) schemes or polynomial-approximation~\cite{liu2017oblivious,gilad2016cryptonets}. The former requires the Server and Client to be online constantly and pay the price of extra communication overhead~\cite{bost2015machine,liu2017oblivious}; the latter leads to the use of a (somewhat) fully homomorphic encryption scheme since it introduces homomorphic multiplications between two ciphertexts ($\otimes$)~\cite{gilad2016cryptonets}. Apparently, CrytoRec yields the best efficiency performance.

\subsubsection{Evaluation of CryptoRec}
As shown in Table \ref{tab:ccc}, CryptoRec needs only homomorphic additions $\oplus$, and multiplications between ciphertexts and plaintexts $\odot$. As such, any additively homommorphic encryption can be employed to implement CryptoRec. In this paper, we adopt the Paillier cryptosystem~\cite{paillier1999public} implemented in the library python-paillier~\cite{lib:paillier}. In the implementation, we scale-up the parameter values to integers, the Client can obtain correct recommendations by simply sorting the prediction results. We let secret key size $l=2048$. In this setting, the message size of one encrypted rating $\llbracket r_{ui} \rrbracket$ is around 512 bytes, or 0.5 KB.

Following the pruning method proposed by~\cite{han2015deep}, we remove the model parameters of which the values are very close to zero (i.e., [$-5\times 10^{-4}, 5\times 10^{-4}$]), since these model parameters don't contribute to the final predictions. Then we quantify the values of the left model parameters to be 11 bits (2048 shared parameter values), of which we can reuse most of the related computations. It is worth mentioning that this approach does not compromise the accuracy, sometimes, it even leads to a slightly better accuracy performance. The same phenomenon has been also observed by some other works such as \cite{han2015deep,nowlan1992simplifying}.

\begin{figure}[h]
\centering
\includegraphics[scale=0.41]{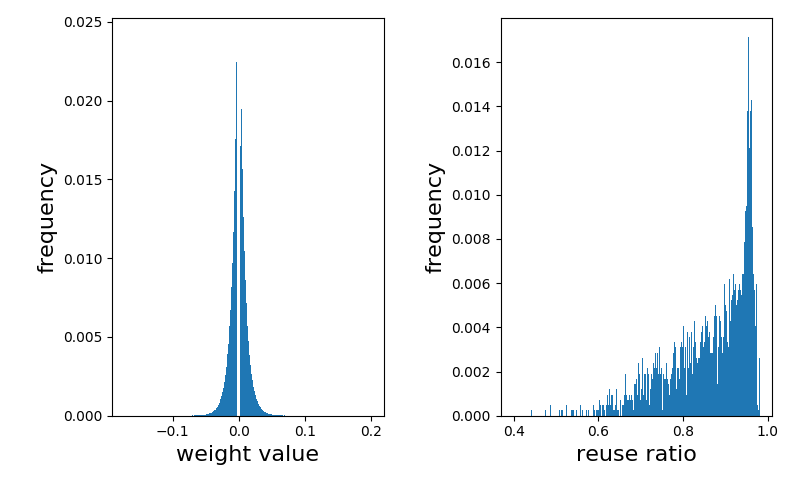}
\caption{CryptoRec's model trained on dataset ml1m: the distribution of parameter values after pruning (left panel); the distribution of the reuse ratio of each row of model parameters $\textbf{A}$ and $\textbf{Q}^T$ after the quantification (right panel). }
\label{fig:prune}
\end{figure}

The left panel of Figure \ref{fig:prune} intuitively describes the model parameter values distribution after the pruning, and the right panel of  Figure \ref{fig:prune} is the reuse ratio distribution of each row of model parameters $\textbf{A}$ and $\textbf{Q}^T$ after the quantification, the model here is trained on dataset ml1m. Table \ref{tab:prune} presents the \textbf{pruning ratio} and overall \textbf{reuse ratio} of the CryptoRec's model trained on each dataset, where we define the \textbf{pruning ratio} as $\frac{\#\ of\ pruned\ parameter}{ \#\ of\ all\ the\ parameter }$,  and compute the \textbf{reuse ratio} as $1-\frac{\#\ of\ unique\ parameter}{\#\ of\ all\ parameter}$. 

\begin{table}[h!]
\centering
\begin{tabular}{|l|l|l|l|}
\hline
\textbf{} & netflix          & ml1m             & yahoo            \\ \hline
pruning ratio &7.1\% &  9.2\% & 29.4\% \\ \hline
reuse ratio & 90.7\% & 90.5\% & 91.5\% \\ \hline
\end{tabular}
\caption{Parameter pruning ratio and computation reuse ratio of CryptoRec's models}
\label{tab:prune}
\end{table}

According to Table \ref{tab:prune}, we know that reusing computations on the shared parameter values is able to significantly reduce the computational complexity. For example, when computing $\llbracket r_{ui}\rrbracket \odot \textbf{A}_{j:} = \{\llbracket r_{ui}\rrbracket \odot \textbf{A}_{j1}, \llbracket r_{ui}\rrbracket \odot \textbf{A}_{j2},\cdots,\llbracket r_{ui}\rrbracket \odot \textbf{A}_{jd} \}$, we only need to compute  $\odot$ operations on each shared parameter value of $\textbf{A}_{j:}$ ($j$-th row of $\textbf{A}$), and then reuse the results at the other places of $\textbf{A}_{j:}$.

\begin{table}[h]
\centering
\begin{tabular}{|l|c|c|c|}
\hline
\textbf{}          & netflix & ml1m & yahoo \\ \hline
Communication (MB)  & 4.8    & 3.86 & 3.72  \\ \hline
Server time cost (s) & 14.2 & 10.9 & 7.3  \\ \hline
Client time cost (s) & 7.1 & 5.8 & 5.6 \\ \hline
\end{tabular}
\caption{The communication (MB) and time (s) cost of CryptoRec with a pre-trained model}
\label{tab:msgtime}
\end{table}

We summarize the communication and time cost of the Client and Server in Table \ref{tab:msgtime}. To elaborate the prediction process and the costs, we take the experiment on dataset ml1m as an example (We ignore the time cost of a public key pair generation, as it is trivial to the overall time cost),
\begin{itemize}
    \item Client:  Encrypting the rating vector $\llbracket \textbf{r}_u^{1\times 3952} \rrbracket$  takes 4.5 seconds. The message size of  $\llbracket \textbf{r}_u^{1\times 3952} \rrbracket$ is $0.5 \times 3952$ KB, or 1.93 MB.
    \item Server: Executing CryptoRec on $\llbracket \textbf{r}_u^{1\times 3952} \rrbracket$  takes 10.9 seconds. The message size of the output $\llbracket \hat{\textbf{r}}_u^{1\times 3952} \rrbracket$ is 1.93 MB.
    \item Client: Decrypting $\llbracket \hat{\textbf{r}}_u^{1\times 3952} \rrbracket$ takes 1.3 seconds.
\end{itemize}

We also implement the prediction process of I-NBM with the Paillier cryptosystem, where the item-item similarity matrix is pre-computed. Selecting the most similar $N$ items to a targeted item from a user's rating history is a typical approach used in I-NBM to compute recommendations. However, this approach introduces a number of extra non-linear operations (i.e., comparisons) which are not straightforwardly compatible with homomorphic encryption schemes. To address this issue, for each entry of the similarity matrix, we remove a certain number (e.g., 30\%) of elements which have the least values. The predictions computed on the sparsified similarity matrix are asymptotically close to the true predictions. In fact, using all the items for the prediction may lead to a significant accuracy loss. In our implementation, for one query, I-NBM requires 491 seconds, 335 seconds and 306 seconds on netflix, ml1m, yahoo datasets, respectively. We noted that Shmueli at al.~\cite{shmueli2017secure} used an additional mediator (i.e., a non-colluding global server) to achieve a more efficient solution. However, we focus on the 2PC protocol without using any third party, and in their setting, participants know which item to predict while in our case the Server doesn't know it. It is not necessary to include U-AutoRec in the comparison, because Sigmoid transformations it contains will result in a much worse efficiency performance.

CryptoRec's model allows providing accurate recommendations by a pre-trained model (the Client's data is not in the training set). So, the Server can provide recommendation services with a high throughput. In contrast, for the models which fall into category ``w/ Client", the time cost of the training process should be also counted, which leads to a notorious efficiency problem. For example, privately training matrix factorization on the dataset ml1m needs around 20 hours per iteration (more details are in Section \ref{sec:eoc}).

\begin{table*}[t]
\centering
\begin{tabular}{|l|c|c|c|c|c|c|c|c|c|}
\hline
\multirow{2}{*}{} & \multicolumn{3}{c|}{netflix}                           & \multicolumn{3}{c|}{ml1m}                              & \multicolumn{3}{c|}{yahoo}                              \\ \cline{2-10}
                  & RMSE                      & loss  \%      & iteration\# & RMSE                      & loss  \%     & iteration\# & RMSE                      &  loss  \%       & iteration\# \\ \hline
I-NBM             & 0.9061$\pm$0.005          & 8.7          & 1           & 0.8815$\pm$0.007          & 5.4          & 1           & 0.9853$\pm$0.014          & -0.3          & 1           \\ \hline
U-AutoRec         & 0.8849$\pm$0.007          & 6.2          & 35          & 0.8739$\pm$0.009          & 4.4          & 30          & 1.0583$\pm$0.016          & 7.1           & 26          \\ \hline
\emph{I-AutoRec}         & \emph{0.8334$\pm$0.006} & \emph{0}           & \emph{140}         & \emph{0.8367$\pm$0.004} & \emph{0}            & \emph{110}         & \emph{0.9880$\pm$0.015} & 0             & \emph{125}         \\ \hline
BiasedMF          & 0.8587$\pm$0.007          & 3.0          & 85          & 0.8628$\pm$0.009          & 3.1          & 80         & 0.9980$\pm$0.022          & 1.0           & 72          \\ \hline
CryptoRec         & \textbf{0.8391$\pm$0.006} & \textbf{0.7} & \textbf{12} & \textbf{0.8543$\pm$0.007} & \textbf{2.1} & \textbf{15} & \textbf{0.9821$\pm$0.013} & \textbf{-0.6} & \textbf{22} \\ \hline
\end{tabular}
\caption{Accuracy comparsion with a re-training step, I-AutoRec is the accuracy benchmark.}
\label{tab:accret}
\end{table*}

\begin{table*}[]
\centering
\begin{tabular}{|l|c|c|c|c|c|c|}
\hline
          & $\oplus$                      & $\otimes$                     & $\odot$                     & div                & Sigmoid                                & sqrt               \\ \hline
I-NBM     & $\mathcal{O}(m)$            & $\mathcal{O}(m^2)$            & $\mathcal{O}(m)$                 & $\mathcal{O}(m^2)$ & $\varnothing$       & $\mathcal{O}(m)$ \\ \hline
U-AutoRec & $\mathcal{O}(K(m+N_{\tau})d)$ & $\mathcal{O}(K(m+N_{\tau})d)$ & $\mathcal{O}(K\tau zd+md)$ & $\varnothing$      & $\mathcal{O}(Kmd)$             & $\varnothing$      \\ \hline
I-AutoRec & $\mathcal{O}(K(mn+N)d)$        & $\mathcal{O}(K(m+N)d)$        & $\mathcal{O}(Kmzd)$         & $\varnothing$      & $\mathcal{O}(Knd)$           & $\varnothing$      \\ \hline
BiasedMF  & $\mathcal{O}(Kmnd)$        & $\mathcal{O}(K(m+N)d)$        & $\mathcal{O}(md)$      & $\varnothing$      & $\varnothing$                 & $\varnothing$      \\ \hline
CryptoRec & $\mathcal{O}(K(m+N_{\tau})d)$ & $\mathcal{O}(K(m+N_{\tau})d)$ & $\mathcal{O}(K\tau zd+md)$ & $\varnothing$      & $\varnothing$                 & $\varnothing$      \\ \hline
\end{tabular}
\caption{Computational complexity comparison with a re-training step. $K$ is the number of training iterations. $N$ is the number of all observed ratings in the Server's dataset. $N_{\tau}$ is the number of observed ratings of the $\tau$ randomly selected users.  $z$ is the rating scale ($z=5$ in this paper). }
\label{tab:cccwr}
\end{table*}

\subsection{Comparison with a Re-training Step}
\label{sec:wrt}
In this section, we investigate the accuracy and efficiency performances of using the Client's  data to re-train a pre-learned CryptoRec's model. We first describe the details of re-training CryptoRec's model, then introduce a one-iteration training method for the sake of efficiency.

\subsubsection{ Re-training CryptoRec's Model}
\label{sec:traindetail}

\textbf{Avoiding Overfitting.} Using a single user's data to fine-tune a machine learning model learned from a large dataset may lead to an early overfitting. To address this issue, we re-train CryptoRec's model using the Client's data together with $\tau$ randomly selected users' data, where the $\tau$ users serve as a regularization term. We empirically set $\tau=10$, and $\tau \ll n$ ($n$ is the number of users in the Server's dataset).

\begin{figure}[h]
\centering
\includegraphics[scale=0.45]{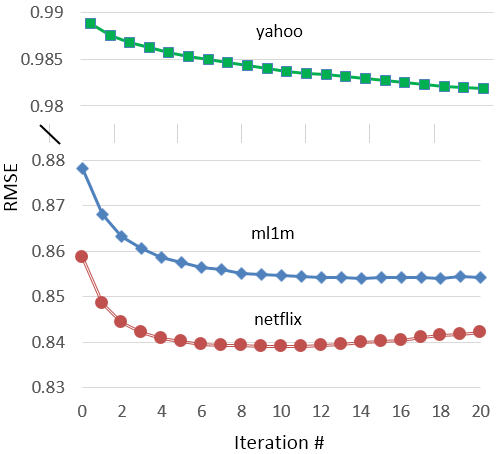}
\caption{Re-training CryptoRec with different iteration\# }
\label{fig:stop}
\end{figure}

\textbf{Stopping Criterion.} Identifying the stopping point of a training process over encrypted data is not as straightforward as doing that on clear data. This is because the Server gets only an encrypted  model, that the accuracy performance at each training iteration cannot be observed. To address this issue, the early-stopping strategy~\cite{prechelt1998early} can be a choice. Fortunately, we have also observed that, for the re-training process, the first several training iterations contribute most to the accuracy increase (RMSE decrease), as shown in Figure \ref{fig:stop} and Table \ref{tab:wovsw}. Specifically, the first training iteration leads to a big step towards the optimal accuracy performance. With 3 to 5 iterations, the accuracy performance can be asymptotically close to the best accuracy.  Therefore, we can conservatively re-train CryptoRec's model (e.g., 4 iterations) while still leading to a nearly consistent accuracy optimization.

\begin{table}[]
\centering
\begin{tabular}{|l|c|l|l|l|l|}
\hline
\multirow{2}{*}{} & \multicolumn{1}{c|}{\multirow{2}{*}{\begin{tabular}[c]{@{}c@{}}w/o retrain\\ (RMSE)\end{tabular}}} & \multicolumn{2}{c|}{retrain-full}                      & \multicolumn{2}{c|}{retrain-once}                      \\ \cline{3-6} 
                  & \multicolumn{1}{c|}{}                                                                              & \multicolumn{1}{c|}{RMSE} & \multicolumn{1}{c|}{inc\%} & \multicolumn{1}{c|}{RMSE} & \multicolumn{1}{c|}{inc\%} \\ \hline
netflix           & 0.8586                                                                                             & 0.8391                    & 2.3                        & 0.8485                    & 1.18                       \\ \hline
ml1m              & 0.8781                                                                                             & 0.8543                    & 2.7                        & 0.8680                    & 1.15                       \\ \hline
yahoo             & 0.9888                                                                                             & 0.9821                    & 0.7                        & 0.9874                    & 0.14                       \\ \hline
\end{tabular}
\caption{CryptoRec accuracy comparison: without retraining (w/o retrain) - retrain until convergence (retrain-full) -  retrain only once (retrain-once). ``inc\%" denotes  the percentage of the accuracy increase. }
\label{tab:wovsw}
\end{table}

\subsubsection{Accuracy comparison}
We summarize the accuracy performance of each model  in Table~\ref{tab:accret}, experiment results show that the accuracy performance of CryptoRec is competitive with the benchmark (as described in Table~\ref{tab:bench}) and consistently outperforms the other baseline models. Specifically, compared to the benchmark, CryptoRec loses 0.7\% accuracy on netflix; loses 2.1\% accuracy on ml1m; on yahoo, CryptoRec slightly outperforms the benchmark. Note that Table~\ref{tab:accret} presents the optimal accuracy performance of each model. In practice, the Server may achieve a suboptimal accuracy performance, due to the stopping point selection strategy or the constraint of computational resource. Roughly, predictions using only a pre-trained model reach the lower-bound of accuracy, the re-training process leads to a better accuracy performance. The optimal accuracy performance can guide users to perform the trade-off between efficiency and accuracy.

\subsubsection{Computational Complexity Comparison}
Table \ref{tab:cccwr} presents the computational complexity of each model. Among all the models, only MF and CryptoRec's model can be trained without using non-linear operations. However, MF has to be trained on the whole dataset which results in a serious efficiency issue. In contrast, re-training CryptoRec needs only the Client's data and the data of several randomly selected users (for regularization). We have noted that some researchers proposed incremental matrix factorization training methods such as \cite{vinagre2014fast,song2015incremental,huangincremental}. Unfortunately, these incremental training methods either require the Server to collect partial data of the Client \cite{vinagre2014fast,huangincremental} (we assume that the Server has no prior knowledge of the Client' rating data), or introduce extra non-linear operations~\cite{song2015incremental}. Therefore, we don't include these incremental matrix factorization training methods in the comparison. As presented in Table \ref{tab:cccwr}, CryptoRec shows a significant advantage in the efficiency performance.

\subsubsection{Evaluation of CryptoRec}
\label{sec:eoc}
Re-training CryptoRec's model needs a (somewhat) fully homomorphic encryption scheme (SWHE) since homomorphic addition $(\oplus)$ and multiplication ($\otimes$) are both required.  Some of the more significant advances in implementation improvements for SWHEs have come in the context of  the ring-learning-with-error (RLWE) based schemes, such as Fan-Vercauteren scheme~\cite{fan2012somewhat}. RLWE-based homomorphic encryption schemes map a plaintext message from ring $\mathfrak{R}_t^p:=\mathbb{Z}_t[x]/(x^p+1)$ to ring $\mathfrak{R}_t^p:=\mathbb{Z}_q[x]/(x^p+1)$ (ciphertext). The security level depends on the plaintext modulus $t$,  the coefficient modulus $q$, the degree $p$ of the polynomial modulus. In this paper, we adopt the Fan-Vercauteren scheme,  a real-number-supported version of which is implemented in the SEAL library~\cite{chen2017simple}. We set the polynomial degree $p=4096$,  the plaintext modulus $t=65537$, $q$ is automatically selected by the SEAL library given the degree $p$. To encode real numbers, we reserve 1024 coefficients of the polynomial for the integral part (low-degree terms) and expand the fractional part to 16 digits of precision  (high-degree terms). The circuit privacy is guaranteed by using \emph{ relinearization} operations~\cite[Section 8]{chen2017simple}. We refer interested readers to the paper \cite{chen2017simple} for more detail of the settings.

Compared to partial homomorphic encryption schemes such as the Paillier cryptosystem,  using an SWHE scheme results in a much larger ciphertext, which in turn leads to a higher computational complexity for a homomorphic operation. In this paper, the polynomial degree $p=4096$. Each coefficient of the polynomial costs 24 bytes (using SEAL)~\cite{gilad2016cryptonets}. So the size of a ciphertext is 4096*24 bytes or 96 KB. Taking the re-training process on dataset ml1m as an example, the item features $\llbracket \textbf{Q}^{3952\times 500} \rrbracket$ need 3952*500*96 KB or 181 GB RAM. Though it is not infeasible for a commercial server,  it is too expensive to respond to a single query while the accuracy improvement is limited.

\begin{algorithm}[h]
\caption{Re-train CryptoRec's model with one iteration}
\label{alg:retrain1}
\begin{algorithmic}[1]
\Procedure{Re-train}{$  \llbracket \textbf{r}_u \rrbracket,  \llbracket \boldsymbol{\phi}_u \rrbracket, \textbf{A}^{(0)}, \textbf{Q}^{(0)}, \lambda, \eta$} 
\State $\llbracket \textbf{y}_u  \rrbracket \gets \llbracket \textbf{r}_u \rrbracket \textbf{A}^{(0)}$
\State$\llbracket \textbf{e}_{u} \rrbracket \gets \llbracket \hat{\textbf{r}}_u \rrbracket \ominus  \llbracket \textbf{r}_u \rrbracket=\llbracket \textbf{y}_u  \rrbracket \textbf{Q}^{(0)} \ominus  \llbracket \textbf{r}_u \rrbracket$
\State $\llbracket \textbf{x}_u  \rrbracket \gets (\llbracket \textbf{e}_{u} \rrbracket \otimes   \llbracket \boldsymbol{\phi}_u \rrbracket)\textbf{Q}^{(0)}$
\For{\texttt{$j \gets \{ 1,2,\cdots,d \}$ }}
\State $\llbracket \Delta \textbf{A}_{:j} \rrbracket \gets  (\llbracket \textbf{x}_u  \rrbracket[j] \otimes  \llbracket \textbf{r}_{u}^T \rrbracket)  \oplus \lambda \cdot \textbf{A}^{(0)}_{:j} $ \Comment{gradient}
\State $\llbracket \textbf{A}_{:j}\rrbracket \gets  \textbf{A}^{(0)}_{:j} \ominus  (\eta \odot \llbracket \Delta \textbf{A}_{:j} \rrbracket$) \Comment{updates $\textbf{A}_{:j}$}
\State $\llbracket \textbf{p}_{u} \rrbracket[j] \gets \llbracket \textbf{r}_{u} \rrbracket \llbracket \textbf{A}_{:j} \rrbracket $ \Comment{computes user features}
\State release $ \llbracket \textbf{A}_{:j} \rrbracket$, $ \llbracket \textbf{x}_{u} \rrbracket[j]$
\EndFor
\State release  $\llbracket \textbf{r}_u \rrbracket$
\For{\texttt{$i \gets \{ 1,2,\cdots,m \}$ }}
\State $\llbracket  \Delta \textbf{q}_i \rrbracket  \gets \llbracket \boldsymbol{\phi}_{u} \rrbracket [i] \otimes ((\llbracket \textbf{e}_u \rrbracket[i] \otimes  \llbracket \textbf{y}_u  \rrbracket) \oplus \lambda \cdot \textbf{q}^{(0)}_i)$ 
\State  $ \llbracket \textbf{q}_{i}  \rrbracket \gets \textbf{q}_i^{(0)} \ominus  (\eta \odot \llbracket  \Delta \textbf{q}_i \rrbracket)$ \Comment{updates $\textbf{q}_{i}$}
\State $\llbracket \hat{\textbf{r}}_{u} \rrbracket[i] \gets  \llbracket \textbf{p}_u \rrbracket \llbracket \textbf{q}_i \rrbracket  $ \Comment{computes the prediction $\hat{r}_{ui}$}
\State release $ \llbracket \textbf{q}_{i} \rrbracket $, $\llbracket \textbf{e}_u \rrbracket[i]$, $\llbracket \boldsymbol{\phi}_u \rrbracket[i]$
\EndFor
\State \textbf{return} $ \llbracket \hat{\textbf{r}}_{u} \rrbracket $
\EndProcedure
\end{algorithmic}
\end{algorithm}

By exploiting the fact that the first re-training iteration contributes a big portion to the accuracy increase (Table \ref{tab:wovsw}), we introduce an efficient one-iteration re-training method, described in Algorithm \ref{alg:retrain1}\footnote {For simplicity, we omitted from Algorithm \ref{alg:retrain1} the bias terms and the $\tau$ number of randomly chosen users (Section \ref{sec:traindetail}). Note that the gradients computed from the data of the $\tau$ users and the pre-trained model parameters are plaintext. Therefore, all the operations related to the $\tau$ users are in plaintext, and have a trivial impact on the efficiency performance.}. The gradients of parameters are presented in Equation~(\ref{eq:grad}). The basic idea of this method is to timely release the model parameters which will not be used in the future (line 9, 10, 15, Algorithm \ref{alg:retrain1}). For example, we immediately release $\llbracket \textbf{A}_{:j} \rrbracket $ and $\llbracket \textbf{x}_{u} \rrbracket[j]$ after computing $\llbracket \textbf{p}_{u} \rrbracket[j]$  (line 9), where  $\textbf{A}_{:j} $ denotes $j$-th column of matrix $ \textbf{A}$. $ \textbf{Q}^{(0)}$ and $ \textbf{A}^{(0)}$ are pre-trained model parameters. $\llbracket \textbf{x} \rrbracket *  \llbracket \textbf{y} \rrbracket$ denotes $ \{ \llbracket x_i \rrbracket *  \llbracket y_i \rrbracket\}_m $ and $x *  \llbracket \textbf{y} \rrbracket$ denotes $ \{ x_i *  \llbracket y_i \rrbracket\}_m $ , where $*$ can be any operator such as $\oplus, \otimes$. $\ominus$ is homomorphic subtraction which can be implemented by $\oplus$. With Algorithm \ref{alg:retrain1}, we can complete the one-iteration training process (including computing the predictions) with less than 2 GB RAM.

\begin{table}[h]
\centering
\begin{tabular}{|l|c|c|c|}
\hline
\textbf{}          & netflix & ml1m & yahoo \\ \hline
Communication (GB)  & 1.31    & 1.08 & 1.04  \\ \hline
Server time cost (H) & 9.4 & 7.8 & 7.5  \\ \hline
Client time cost (s) & 14.3 & 11.8 &11.4  \\ \hline
\end{tabular}
\caption{The communication and time cost of CryptoRec with one-iteration re-training process}
\label{tab:msgtime2}
\end{table}

We summarize the communication and time costs of the Client and Server in Table \ref{tab:msgtime2}. We take the experiment on dataset ml1m as an example to introduce the cost on the two sides, respectively,
\begin{itemize}
   \item Client:  Encrypting the rating vector $\textbf{r}_u^{1\times 3952}$ and indication vector $\boldsymbol{\phi}_u^{1\times 3952}$   takes 9.6 seconds. The message size of  $\llbracket \textbf{r}_u^{1\times 3952} \rrbracket$ and $\llbracket \boldsymbol{\phi}_u^{1\times 3952} \rrbracket$  is $96 \times 3952 \times 2$ KB, or 741 MB.
    \item Server: Executing CryptoRec on $\llbracket \textbf{r}_u^{1\times 3952} \rrbracket$  takes 7.8 hours. The message size of the output is $\llbracket \hat{\textbf{r}}_u^{1\times 3952} \rrbracket$ is 370.5 MB.
    \item Client: Decrypting $\llbracket \hat{\textbf{r}}_u^{1\times 3952} \rrbracket$ takes 2.2 seconds.
\end{itemize}


In contrast, the models which fall into ``w/ Client" category lead to a much higher time cost. For example, a recent work, GraphSC~\cite{nayak2015graphsc}, shows that a single iteration of training MF (the dimension of user/item features is 10), on the same dataset ml1m, took roughly 13 hours to run on 7 machines with 128 processors. In our setting, by making full use of the fact that the Server knows most of the users data, it still needs around 20 hours for any $i$-th iteration with 8 processors, where $i>1$. Worse, dozens of iterations are necessary for convergence \cite{koren2009matrix,nikolaenko2013privacy}.

\subsection{Discussion on Privacy and Scalability}
In this paper, we assume that the Server and Client should always agree upon a set of items, as it does not make sense to buy a service that the other party doesn't have, and vice versa. In fact, this assumption leads to a a trade-off between privacy and scalability. Informally, the more items that the two sides agreed on, the more privacy can be preserved. An online service provider (e.g., Youtube) may have millions of products, it is a notoriously challenging problem to provide recommendations from such a large corpus, even on clear data. A typical approach is to generate a small set of candidates, then compute recommendations from the candidates \cite{covington2016deep}. For our scenario, context information can be used to guide candidate generation, but still depending on whether such a context information is a privacy issue that the Client cares about (different users may have different concerns about privacy). The Server can train different recommendation models over datasets generated by different criteria such as children-friendly, place-of-origin, time-of-produce and so on. The Client can choose a criterion which doesn't violate her privacy concern, or choose multiple criteria at a time. How to design these criteria requires a further investigation of user preferences on privacy.

\section{Related Work}
Canny et al.~\cite{canny2002collaborative} introduced a privacy-preserving solution for training collaborative filtering models (e.g., Singular Value Decomposition) in a peer-to-peer manner without assuming any trusted server. Nikolaenko et al.~\cite{nikolaenko2013privacy} proposed a garbled circuits~\cite{kolesnikov2008improved} based secure protocol to allow multiple users jointly train matrix factorization (MF), in which they assume two non-colluding servers. Shmueli et al.~\cite{shmueli2017secure} discussed that multi-party privately learn a neighborhood-based recommendation model by assuming a mediator that performs intermediate computations on encrypted data supplied by each party. Nayak at al.~\cite{nayak2015graphsc} brought parallelism to the secure implementation of oblivious version of graph-based algorithms (e.g., MF). Mohassel et al.~\cite{mohassel2017secureml}
further improved the efficiency of a secure framework with two non-colluding servers. The above solutions aim to privately train existing machine learning models. Different from these solutions, we aim to build a secure two-party computation protocol for Recommendation as a Service, without involving any third party (e.g., an additional non-colluding server). 

 Some recent works, such as ~\cite{gilad2016cryptonets,liu2017oblivious,rouhani2017deepsecure}, focused on neural network based  Machine Learning as a Service (MLaaS), the scenario of which is similar to ours. Their primary contribution is how to efficiently compute non-linear operations (e.g., comparison or Sigmoid function) on encrypted data. Gilad-Bachrach et al.~\cite{gilad2016cryptonets} substituted state-of-the-art activation functions such as ReLu ($relu(x)=max(0,x)$) with a simple square activation function ($f(x)=x^2$), this avoid the use of secure multiparty computation schemes. However, this approach often leads to a significant accuracy loss~\cite{liu2017oblivious,rouhani2017deepsecure}. To preserve the accuracy performance, Liu et al.~\cite{liu2017oblivious} and Rouhani et al.~\cite{rouhani2017deepsecure} proposed to evaluate neural networks with resort to secure multiparty computation schemes. Unfortunately, this approach requires the Client and Server to be online constantly. 

An orthogonal line of work focuses on constructing differentially private machine learning models, e.g., \cite{mcsherry2009differentially,liu2015fast,shokri2015privacy}. In their security models, a trusted server has full access to all the user data. It wishes to prevent adversaries from breaching the user privacy by exploiting the prediction results (i.e., inference attack).  In our security model, the Server learns nothing about client inputs; at the same time, the Client only learns what she can learn from the recommendation results. Our work and differential privacy~\cite{dwork2014algorithmic} can be complementary to each other.

\section{Conclusions}

In this paper, we proposed CryptoRec, a new secure two-party computation protocol for Recommendation as a Service. CryptoRec encompasses a homomorphic encryption friendly recommender system. This model uses only addition and multiplication operations, that it is straightforwardly compatible with homomorphic encryption schemes. Moreover, it can produce recommendations by using a pre-trained model while the Client's data is not in the Server's training set. As demonstrated in the experiments, CryptoRec is able to provide recommendation services with a high throughput, while still standing up to state-of-the-art accuracy performance.

\bibliographystyle{IEEEtran}
\bibliography{ref-lol}

\end{document}